\newcommand{\remove}[1]{}
\renewcommand{\int}{int}
\renewenvironment{proof}
{{\bf Proof:}}{\hspace*{\fill}$\Box$\par\vspace{2mm}}
\newcommand{\areaSP}{O(n^{0.695})}
\begin{document}
\title{On the Edge-Length Ratio of Planar Graphs}
\author{Manuel Borrazzo and Fabrizio Frati}
\institute{University Roma Tre, Italy -- \email{\{borrazzo,frati\}@dia.uniroma3.it}\\
}
\maketitle

\begin{abstract}
	The edge-length ratio of a straight-line drawing of a graph is the ratio between the lengths of the longest and of the shortest edge in the drawing. The planar edge-length ratio of a planar graph is the minimum edge-length ratio of any planar straight-line drawing of the graph.
	
	In this paper, we study the planar edge-length ratio of planar graphs. We prove that there exist $n$-vertex planar graphs whose planar edge-length ratio is in $\Omega(n)$; this bound is tight. We also prove upper bounds on the planar edge-length ratio of several families of planar graphs, including series-parallel graphs and bipartite planar graphs.
\end{abstract}


\section{Introduction}\label{se:introduction}
The reference book for the graph drawing research field ``Graph Drawing: Algorithms for the Visualization of Graphs'', by Di Battista, Eades, Tamassia, and Tollis~\cite{dett-gd-93}, mentions that the minimization of the maximum edge length, provided that the minimum edge length is a fixed value, is among the most important aesthetic criteria that one should aim to satisfy in order to guarantee the readability of a graph drawing. A measure that naturally captures this concept is the \emph{edge-length ratio} of a drawing; this is defined as the ratio between the lengths of the longest and shortest \mbox{edge in the drawing.}

In this paper we are interested in the construction of planar straight-line drawings with small edge-length ratio. From an algorithmic point of view, it has long been known that deciding whether a graph admits a planar straight-line drawing with edge-length ratio equal to $1$ is an NP-hard problem. This was first proved by Eades and Wormald~\cite{ew-fel-90} for biconnected planar graphs and then by Cabello et al.~\cite{DBLP:journals/jgaa/CabelloDR07} for triconnected planar graphs. From a combinatorial point of view, the study of planar straight-line drawings with small edge-length ratio started only recently, when Lazard, Lenhart, and Liotta~\cite{DBLP:journals/tcs/LazardLL19} proved that every outerplanar graph admits a planar straight-line drawing with edge-length ratio smaller than $2$ and that, for every fixed $\epsilon>0$, there exist outerplanar graphs whose every planar straight-line drawing has edge-length ratio larger than $2-\epsilon$.

Adopting the notation and the definitions from~\cite{DBLP:conf/gd/LazardLL17,DBLP:journals/tcs/LazardLL19}, we denote by $\rho(\Gamma)$ the edge-length ratio of a straight-line drawing $\Gamma$ of a graph $G$, i.e., $\rho(\Gamma)=\max\limits_{e_1,e_2\in E(G)}\frac{\ell_{\Gamma}(e_1)}{\ell_{\Gamma}(e_2)}$, where $\ell_{\Gamma}(e)$ denotes the Euclidean length of the segment representing an edge $e$ in $\Gamma$. The \emph{planar edge-length ratio} $\rho(G)$ of $G$ is the minimum edge-length ratio of any planar straight-line drawing of $G$. We prove \mbox{the following results.}

First, we prove that there exist $n$-vertex planar graphs whose planar edge-length ratio is in $\Omega(n)$. This bound is asymptotically tight, as every planar graph admits a planar straight-line drawing on an $O(n)\times O(n)$ grid~\cite{DBLP:journals/combinatorica/FraysseixPP90,DBLP:conf/soda/Schnyder90}; such a drawing has edge-length ratio in $O(n)$. While our lower bound is not surprising, it was unexpectedly challenging to prove it. Some classes of graphs which are often used in order to prove lower bounds for graph drawing problems turn out to have constant planar edge-length ratio; see Figure~\ref{fig:examples}.

\begin{figure}[tb]\tabcolsep=4pt
	\centering
	\begin{tabular}{c c}
		\includegraphics[scale=1]{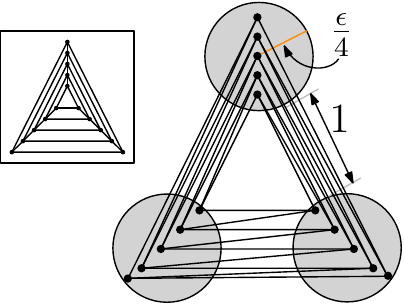} \hspace{3mm} &
		\includegraphics[scale=1.1]{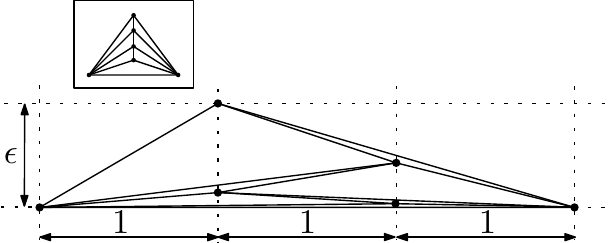} \\
		(a) \hspace{3mm} & (b)\\
	\end{tabular}
	\caption{(a) A drawing with edge-length ratio smaller than $1+\epsilon$ of the nested-triangle graph. (b) A drawing with edge-length ratio smaller than $3$ of the plane $3$-tree obtained as the join of a path with an edge.}
	\label{fig:examples}
\end{figure}

Second, we provide upper bounds for the planar edge-length ratio of several families of planar graphs. Namely, we prove that plane $3$-trees have planar edge-length ratio bounded by their ``depth'' and that, for every fixed $\epsilon>0$, bipartite planar graphs have planar edge-length ratio smaller than $1+\epsilon$. Most interestingly, we prove that every $n$-vertex graph with treewidth at most two, including $2$-trees and series-parallel graphs, has sub-linear planar edge-length ratio; our upper bound is $O(n^{\log_2 \phi})\subseteq \areaSP$, where $\phi=\frac{1+\sqrt 5}{2}$ is the golden ratio. Lazard et al.~\cite{DBLP:journals/tcs/LazardLL19} asked whether the planar edge-length ratio of $2$-trees is bounded by a constant. Very recently, Blazej et al.~\cite{bfll-elr2t-19,20-bfll-elr2t} answered the above question in the negative; namely, they proved that there exist $n$-vertex $2$-trees whose planar edge-length ratio is in $\Omega(\log n)$. Thus, our upper bound provides a significant counterpart to Blazej et al.'s result; further, our result sharply contrasts with the fact that there exist $n$-vertex $2$-trees for which every planar straight-line drawing on a grid requires an edge to have length in $\Omega(n)$~\cite{DBLP:journals/dmtcs/Frati10}.

The paper is organized as follows. In Section~\ref{se:preliminaries}, we introduce some definitions; in Section~\ref{se:planar}, we prove a lower bound for the planar edge-length ratio of planar graphs; in Section~\ref{se:classes}, we prove upper bounds for the planar edge-length ratio of families of planar graphs; finally, in Section~\ref{se:conclusions}, we conclude and present some open problems.

\section{Definitions and preliminaries}\label{se:preliminaries}

In this section we establish some definitions and preliminaries.

A {\em drawing} of a graph represents each vertex as a point in the plane and each edge as an open curve between its end-vertices. A drawing is \emph{straight-line} if each edge is represented by a straight-line segment. A drawing is {\em planar} if no two edges intersect, except at common end-vertices. A planar drawing of a graph defines connected regions of the plane, called {\em faces}. The only unbounded face is the {\em outer face}, while the other faces are {\em internal}. Two planar drawings of a (connected) graph are \emph{equivalent} if: (i) the clockwise order of the edges incident to each vertex is the same in both drawings; and (ii) the clockwise order of the edges along the boundary of the outer face is the same in both drawings. A {\em plane embedding} is an equivalence class of planar drawings and a {\em plane graph} is a graph with a prescribed plane embedding. Throughout the paper, whenever we talk about a planar drawing of a plane graph $G$, we always assume, even when not explicitly stated, that it respects the plane embedding associated with~$G$; further, we sometimes talk about a face of a plane graph $G$, meaning a face of a planar drawing respecting the plane embedding associated with~$G$.

For any two distinct points $a$ and $b$ in the plane, we denote by $\overline{ab}$ the straight-line segment between $a$ and $b$ and by $||\overline{ab}||$ the Euclidean length of such a segment. For any three distinct and noncollinear points $a$, $b$, and $c$ in the plane, we denote by $abc$ the triangle whose vertices are $a$, $b$, and $c$. Further, for a triangle $\Delta$, we denote by $p(\Delta)$ its perimeter and by $\angle_a(\Delta)$ the angle at a vertex $a$ of $\Delta$.

We will use the following lemma more than once.

\begin{lemma} \label{le:subgraphs}
	Let $G$ be a planar graph and $G'$ be a subgraph of $G$. \mbox{Then $\rho(G')\leq \rho(G)$.}
\end{lemma}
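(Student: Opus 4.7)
The plan is a straightforward restriction argument. I would start from an arbitrary planar straight-line drawing $\Gamma$ of $G$ and construct a drawing $\Gamma'$ of $G'$ by keeping only the positions assigned to the vertices in $V(G')$ and the segments assigned to the edges in $E(G')$. By construction $\Gamma'$ is straight-line, and it is planar because deleting points and segments from a planar drawing cannot create any new crossing; in particular, this confirms that $G'$ itself is planar, so $\rho(G')$ is well defined.

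The key inequality is $\rho(\Gamma') \leq \rho(\Gamma)$. Since $E(G') \subseteq E(G)$ and every retained edge keeps its original length, the maximum of $\ell_{\Gamma'}(e)$ over $E(G')$ is at most the maximum of $\ell_{\Gamma}(e)$ over $E(G)$, while the minimum of $\ell_{\Gamma'}(e)$ over $E(G')$ is at least the minimum of $\ell_{\Gamma}(e)$ over $E(G)$. Taking the ratio preserves both monotonicities, yielding $\rho(\Gamma') \leq \rho(\Gamma)$, and therefore $\rho(G') \leq \rho(\Gamma') \leq \rho(\Gamma)$. Specializing $\Gamma$ to a planar straight-line drawing of $G$ realizing $\rho(G)$ gives the claim.

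There is essentially no obstacle; the argument is pure bookkeeping. The only point one might worry about is whether $\rho(G)$ is attained as a minimum or merely as an infimum over planar straight-line drawings of $G$. In the latter case I would just apply the restriction to a sequence $\Gamma_n$ with $\rho(\Gamma_n) \to \rho(G)$ to conclude that $\rho(G') \leq \rho(\Gamma'_n) \leq \rho(\Gamma_n)$, and let $n \to \infty$.
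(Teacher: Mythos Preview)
Your argument is correct and matches the paper's proof essentially verbatim: restrict an optimal drawing of $G$ to $G'$, observe that the maximum edge length cannot increase and the minimum cannot decrease, and conclude $\rho(G')\leq\rho(\Gamma')\leq\rho(\Gamma)=\rho(G)$. Your additional remark about infimum versus minimum is a nice extra bit of care that the paper does not address, but it changes nothing in the approach.
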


\begin{proof}
	Consider any planar straight-line drawing $\Gamma$ of $G$ with $\rho(\Gamma)=\rho(G)$. Let $\Gamma'$ be the planar straight-line drawing of $G'$ obtained from $\Gamma$ by removing the vertices and edges that are not in $G'$. The length of the shortest edge in $\Gamma'$ is larger than or equal to the length of the shortest edge in $\Gamma$ (the inequality is strict if no edge of $G$ whose length in $\Gamma$ is minimum belongs to $G'$). Analogously, the length of the longest edge in $\Gamma'$ is smaller than or equal to the length of the longest edge in $\Gamma$. It follows that $\rho(\Gamma')\leq \rho(\Gamma)$, which implies the statement.
\end{proof}

\section{A Lower Bound for Planar Graphs}\label{se:planar}

In this section we show that there exist $n$-vertex planar graphs whose planar edge-length ratio is in $\Omega(n)$. This lower bound is the strongest possible. Namely, every $n$-vertex planar graph admits a planar straight-line drawing on a $O(n)\times O(n)$ grid~\cite{DBLP:journals/combinatorica/FraysseixPP90,DBLP:conf/soda/Schnyder90}; such a drawing has edge-length ratio in~$O(n)$. 

\begin{theorem} \label{th:lower-bound}
	For every $n=6k-2$ with $k\in \mathbb N_{>0}$, there exists an $n$-vertex planar graph whose planar edge-length ratio is in $\Omega(n)$.
\end{theorem}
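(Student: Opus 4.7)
I would construct, for each $k \geq 1$, a planar graph $G_k$ on $6k-2$ vertices built recursively: $G_1$ is a small constant-size planar graph (most naturally $K_4$, giving $n=4$), and $G_{i+1}$ is obtained from $G_i$ by inserting a six-vertex \emph{gadget} into a designated triangular face $f_i$ of $G_i$. The gadget is designed so that (i) its plane embedding inside $f_i$ is combinatorially forced, and (ii) after insertion, one of the new internal faces is again a triangle that will serve as $f_{i+1}$. This yields a chain of $k$ nested gadgets, and by Lemma~\ref{le:subgraphs} it suffices to analyze $G_k$ as a plane graph with this prescribed embedding.

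The heart of the argument is a local geometric lemma asserting that drawing the gadget inside the pocket triangle $f_i$ must shrink the perimeter by a fixed additive amount: if $\ell_{\min}$ denotes the length of the shortest edge in the entire drawing, then $p(f_{i+1}) \leq p(f_i) - \delta \cdot \ell_{\min}$ for some absolute constant $\delta > 0$. The intuition is that the gadget contains a small number of vertices and edges whose combined geometry must occupy a fixed fraction of any triangle containing them, once we insist that all gadget edges have length at least $\ell_{\min}$. Rescaling so that $\ell_{\min}=1$ and iterating this bound from $i=1$ to $i=k$ forces $p(f_1) \geq \delta(k-1)$, so the outermost triangle must have perimeter $\Omega(k)$. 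Since the outer face has constant size, at least one of its edges has length $\Omega(k) = \Omega(n)$, while $\ell_{\min}=1$, which yields $\rho(G_k) \in \Omega(n)$.

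The first step I would carry out is to fix the gadget and verify the combinatorial rigidity: that $G_k$ admits an essentially unique plane embedding (up to reflection of each gadget) so that, in any planar drawing, the $k$ pocket triangles really do form a nested sequence with $f_{i+1}$ strictly inside $f_i$. This sets the stage for an inductive geometric analysis. Second, I would prove the per-level ``perimeter shrinkage'' lemma by a careful case analysis of where the six gadget vertices can sit inside the triangle $f_i$, leveraging the lower bound on individual edge lengths to rule out configurations where $f_{i+1}$ is nearly as large as $f_i$.

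The main obstacle is designing a gadget for which the shrinkage constant $\delta$ is genuinely bounded away from zero independent of the shape of $f_i$. The examples highlighted in Figure~\ref{fig:examples} show that the obvious candidates fail: nested triangles and fans over a path admit drawings with bounded edge-length ratio because they are essentially self-similar at every scale, so any attempted shrinkage lemma gives $\delta=0$. A successful gadget must break this scale invariance, which suggests it should include an internal vertex whose incident edges are all forced to be long relative to a particular edge on $\partial f_i$, thus preventing the next-level pocket from inheriting the same shape as $f_i$. Once the gadget is fixed, most remaining steps are routine planar-embedding bookkeeping and elementary Euclidean geometry inside a triangle.
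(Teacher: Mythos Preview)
Your broad strategy---a nested chain of gadgets together with a per-level perimeter change of a fixed additive constant---is exactly the paper's approach. But the parts you flag as ``to be carried out'' are precisely where the paper's substance lies, and two of your structural choices diverge from what the paper actually does.

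First, the gadget. The paper's gadget adds only \emph{three} vertices per level, not six: level $i$ is a new $3$-cycle $a_ib_ic_i$ surrounding $\mathcal C_{i-1}$, joined by the edges $a_ia_{i-1},a_ib_{i-1},a_ic_{i-1},b_ib_{i-1},b_ic_{i-1},c_ic_{i-1}$. The asymmetry---$a_i$ sees all three old vertices, $b_i$ sees two, $c_i$ sees one---is exactly the device that breaks the scale invariance you correctly identify as the obstacle. The perimeter-growth step $p(\Delta_i)\ge p(\Delta_{i-1})+\gamma$ is then \emph{not} routine bookkeeping: the easy cases are handled by Lemma~\ref{le:lowerbound-small-angle} when the angle at $a_{i-1}$ or at $b_{i-1}$ is at most $90\degree$, but when both are obtuse the paper needs a further split on whether $||\overline{a_iq_i}||\ge 0.4$ or $||\overline{a_ip_i}||\ge 0.4$, an auxiliary similar-triangles argument to show one of these must hold, and a Pythagorean-inequality estimate to finish. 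Your plan to ``prove the shrinkage lemma by a careful case analysis'' is on target, but be aware that this analysis is the core of the proof, not a detail.

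Second, the outer-face issue. Your appeal to ``combinatorial rigidity'' of a single chain rooted at $K_4$ does not settle it: a plane $3$-tree has a unique combinatorial embedding, but the outer face can still be \emph{any} face, and choosing a deep internal face as the outer face inverts the nesting. The paper instead glues \emph{two} copies of its $3k$-vertex chain into two faces $abc$ and $abd$ of a $K_4$; in any planar drawing at least one copy then has its outer cycle bounding its subdrawing, and the perimeter argument applies to that copy. So the $6k-2$ vertex count arises from this doubling trick, not from a six-vertex gadget.
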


The rest of the section is devoted to the proof of Theorem~\ref{th:lower-bound}. The graphs that we use in order to prove the lower bound of Theorem~\ref{th:lower-bound} form a subclass of the planar $3$-trees, for which we will show an upper bound on the planar edge-length ratio in Section~\ref{se:plane-3-trees}.


\begin{figure}[htb]\tabcolsep=4pt
	\centering
	\includegraphics[scale=0.75]{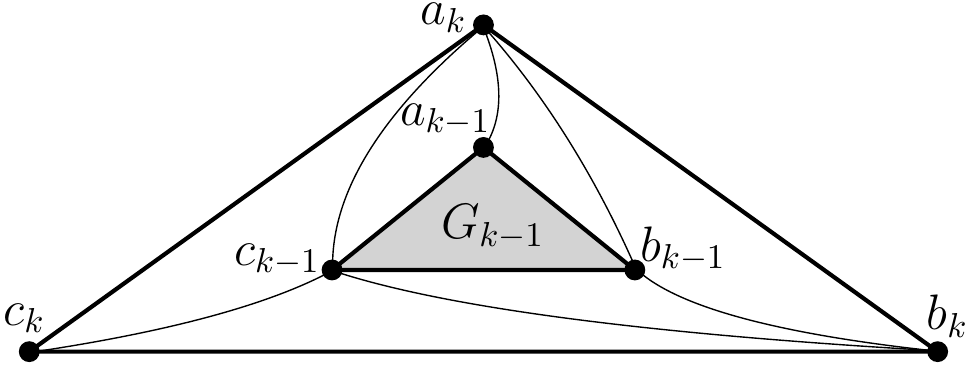} 
	\caption{Construction of the graph $G_{k}$ from the graph $G_{k-1}$.}
	\label{fig:lower-graph}
\end{figure}

We start by defining a class of $3k$-vertex plane graphs, for which we will show that any planar straight-line drawing in which the outer face is delimited by a prescribed cycle has edge-length ratio in $\Omega(k)$. For a $3$-cycle $\mathcal C$ in a plane graph $G$, we denote by $abc$ the clockwise order in which the vertices $a$, $b$, and $c$ of $\mathcal C$ occur along $\mathcal C$. For any integer $k\geq 1$, we define a $3k$-vertex plane graph $G_k$ as follows; refer to Fig.~\ref{fig:lower-graph}. Let $G_1$ coincide with a $3$-cycle ${\mathcal C_1}=a_1b_1c_1$. Now suppose that, for some integer $k\geq 2$, a plane graph $G_{k-1}$ has been defined so that its outer face is delimited by a $3$-cycle ${\mathcal C_{k-1}}=a_{k-1}b_{k-1}c_{k-1}$. Let $G_k$ consist of
\begin{inparaenum}[(i)]
	\item a $3$-cycle ${\mathcal C_{k}}=a_{k}b_{k}c_{k}$,
	\item the plane graph $G_{k-1}$, embedded inside ${\mathcal C_{k}}$, and
	\item the edges $a_ka_{k-1}$, $a_kb_{k-1}$, $a_kc_{k-1}$, $b_kb_{k-1}$, $b_kc_{k-1}$, $c_kc_{k-1}$.
\end{inparaenum} 
Note that $G_k$ has $3k$ vertices.

We prove a lower bound for the edge-length ratio of any planar straight-line drawing~$\Gamma$ of $G_k$ in which the outer face is delimited by ${\mathcal C_{k}}$. Assume, w.l.o.g.\ up to a uniform scaling of $\Gamma$, that the length of the shortest edge is $1$. We prove that, for $i=1,2,\dots,k$, the perimeter $p(\Delta_i)$ of the triangle $\Delta_i$ representing ${\mathcal C_i}$ in $\Gamma$ is at least $\gamma\cdot i$, for a constant $\gamma$ to be determined later. This implies that $p(\Delta_k)\in \Omega(k)$, hence the longest of the three segments composing $\Delta_k$ has length in $\Omega(k)$, and the edge-length ratio $\rho(\Gamma)$ of $\Gamma$ is in $\Omega(k)$. 

The perimeter $p(\Delta_1)$ of $\Delta_1$ is at least $3$, given that each of the three segments composing $\Delta_1$ has length greater than or equal to $1$. Now assume that $p(\Delta_{i-1})\geq \gamma\cdot (i-1)$, for some integer $i\geq 2$ and some constant $\gamma\leq 3$. We prove that $p(\Delta_{i})\geq p(\Delta_{i-1}) + \gamma$, which implies that $p(\Delta_{i})\geq \gamma\cdot i$. 

Before proceeding, we need two geometric lemmata. Refer to Fig.~\ref{fig:lower-graph-lemmata}. Let $a$, $b$, and $c$ be the three vertices of a triangle $\Delta$ and let $d$ be a point outside $\Delta$ such that $a$ either lies inside the triangle $\Delta'$ with vertices $b$, $c$, and $d$, or it lies in the interior of $\overline{bd}$, or it lies in the interior of $\overline{cd}$.

\begin{figure}[htb]\tabcolsep=4pt
	\centering
	\includegraphics[scale=1]{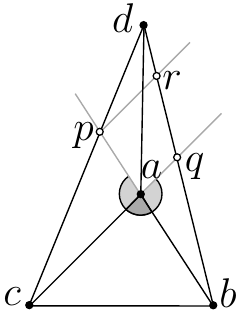}
	\caption{Illustration for the proofs of Lemmata~\ref{le:triangle-inside-triangle} and~\ref{le:lowerbound-small-angle}.}
	\label{fig:lower-graph-lemmata}
\end{figure}

\begin{lemma} \label{le:triangle-inside-triangle}
	$p(\Delta')> p(\Delta)$. 
\end{lemma}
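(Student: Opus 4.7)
The plan is to observe that $\Delta$ and $\Delta'$ share the side $\overline{bc}$, so it suffices to prove that $\|\overline{bd}\| + \|\overline{dc}\| > \|\overline{ab}\| + \|\overline{ac}\|$. I will split into the three cases allowed by the hypothesis: (a) $a$ lies in the interior of $\overline{bd}$; (b) $a$ lies in the interior of $\overline{cd}$; (c) $a$ lies strictly inside $\Delta'$. Cases (a) and (b) are symmetric.

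For case (a), since $a$ is between $b$ and $d$, I get the exact identity $\|\overline{bd}\| = \|\overline{ba}\| + \|\overline{ad}\|$. I then want to apply the triangle inequality to the triangle with vertices $a$, $c$, $d$ to conclude $\|\overline{ad}\| + \|\overline{dc}\| > \|\overline{ac}\|$; adding $\|\overline{ba}\|$ to both sides finishes the case. The only thing to verify is that $a$, $c$, $d$ are not collinear, which follows because otherwise $c$ would lie on the line through $b$ and $d$, forcing $a$, $b$, $c$ to be collinear and contradicting that $\Delta = abc$ is a (non-degenerate) triangle. Case (b) is identical after swapping the roles of $b$ and $c$.

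For case (c), I would use the standard fact that for a point strictly inside a triangle, the sum of its distances to two vertices is smaller than the sum of the two sides meeting at the third vertex. Concretely, I extend the segment $\overline{ba}$ beyond $a$ until it meets the boundary of $\Delta'$; since $a$ is strictly inside $\Delta'$ and $b$ is a vertex of $\Delta'$, this extension first hits the opposite side $\overline{cd}$ at some point~$e$. Applying the triangle inequality in triangle $bde$ gives $\|\overline{be}\| < \|\overline{bd}\| + \|\overline{de}\|$, and applying it in triangle $aec$ gives $\|\overline{ac}\| < \|\overline{ae}\| + \|\overline{ec}\|$. Using $\|\overline{be}\| = \|\overline{ba}\| + \|\overline{ae}\|$ and $\|\overline{dc}\| = \|\overline{de}\| + \|\overline{ec}\|$, the two inequalities combine to $\|\overline{ab}\| + \|\overline{ac}\| < \|\overline{bd}\| + \|\overline{dc}\|$, as desired.

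I do not foresee a serious obstacle here; the subtlety is just verifying non-degeneracy (so that each triangle inequality is strict) and confirming that the ray from $b$ through $a$ in case (c) really exits $\Delta'$ through the side $\overline{cd}$ rather than through $\overline{bd}$ or $\overline{bc}$, which follows from $a \neq b$ and $a$ being in the interior of $\Delta'$.
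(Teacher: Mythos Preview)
Your proof is correct and essentially identical to the paper's. The paper also splits into the same three cases and, for the interior case, extends the line through $b$ and $a$ to meet $\overline{cd}$ at a point (called $p$ there, your $e$), applying the triangle inequality twice; the only cosmetic difference is that the paper phrases the two inequalities as $p(bcp)>p(\Delta)$ and $p(\Delta')>p(bcp)$ via an intermediate triangle, while you combine them directly, and you spell out the non-degeneracy checks that the paper leaves implicit.
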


\begin{proof}
	If $a$ lies in the interior of $\overline{cd}$ (of $\overline{bd}$), then the triangular inequality implies that $||\overline{bd}||+||\overline{da}||>||\overline{ba}||$ (resp.\ $||\overline{cd}||+||\overline{da}||>||\overline{ca}||$),\mbox{ hence $p(\Delta')>p(\Delta)$. }
	
	If $a$ lies inside $\Delta'$, then let $p$ be the intersection point of the straight line through $a$ and $b$ with $\overline{cd}$. By the triangular inequality, we have $||\overline{ap}||+||\overline{cp}||>||\overline{ac}||$, hence $p(bcp)>p(\Delta)$. Again by the triangular inequality, we have $||\overline{bd}||+||\overline{dp}||>||\overline{bp}||$, hence $p(\Delta')>p(bcp)$. 
\end{proof}

We remark that a stronger version of Lemma~\ref{le:triangle-inside-triangle}, which we do not need here, is in fact true: For any two convex polygons $P$ and $Q$ such that $Q$ is contained inside $P$, the perimeter of $P$ is larger than the perimeter of $Q$.

\begin{lemma} \label{le:lowerbound-small-angle}
	If $||\overline{ad}||\geq 1$ and $\angle_a (\Delta)\leq 90\degree$, then $p(\Delta')> p(\Delta)+1$. 
\end{lemma}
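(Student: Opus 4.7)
My plan is to split into the three configurations permitted by the hypothesis: either $a$ lies strictly between $b$ and $d$ on $\overline{bd}$, or symmetrically on $\overline{cd}$, or $a$ lies strictly inside $\Delta'$. The first two cases are symmetric and almost immediate; the third requires a projection argument.

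When $a$ lies on $\overline{bd}$, collinearity yields $||\overline{bd}||=||\overline{ab}||+||\overline{ad}||\geq ||\overline{ab}||+1$. Moreover, since $b,a,d$ are collinear, $\angle dac = 180\degree - \angle bac \geq 90\degree$, so the law of cosines in triangle $acd$ gives $||\overline{cd}||^2 \geq ||\overline{ac}||^2 + ||\overline{ad}||^2 > ||\overline{ac}||^2$, whence $||\overline{cd}||>||\overline{ac}||$. Summing the two contributions gives $p(\Delta')-p(\Delta)>1$. The case $a\in\overline{cd}$ is symmetric.

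For the substantive case, $a$ strictly inside $\Delta'$, the three angles $\angle bac$, $\angle cad$, $\angle dab$ around $a$ sum to $360\degree$; combined with $\angle bac\leq 90\degree$, this forces $\angle dab + \angle dac \geq 270\degree$, and since each such angle is strictly less than $180\degree$, both $\angle dab$ and $\angle dac$ are at least $90\degree$. I would then drop perpendiculars: let $p$ and $q$ be the feet of the perpendiculars from $d$ to the lines through $a,b$ and through $a,c$, respectively. Because $\angle dab\geq 90\degree$, the projection $\vec{ap}$ points opposite to $\vec{ab}$, so $a$ lies on segment $\overline{bp}$ and $||\overline{bp}||=||\overline{ab}||+||\overline{ap}||$; Pythagoras in right triangle $bpd$ then gives $||\overline{bd}||\geq ||\overline{bp}||$, strictly since $d\neq p$ (otherwise $d$ would be on line $ab$, putting us in Case A). Symmetrically, $||\overline{cd}|| > ||\overline{ac}||+||\overline{aq}||$. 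The right triangles $dpa$ and $dqa$ give $||\overline{ap}||=||\overline{ad}||\cdot|\cos(\angle dab)|$ and $||\overline{aq}||=||\overline{ad}||\cdot|\cos(\angle dac)|$.

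To finish, I set $u=180\degree-\angle dab$ and $v=180\degree-\angle dac$. Then $u,v\in(0,90\degree]$ and $u+v\leq 90\degree$, with $|\cos(\angle dab)|=\cos u$ and $|\cos(\angle dac)|=\cos v$. Because cosine is decreasing on $[0,90\degree]$, the minimum of $\cos u+\cos v$ on the triangle $\{u,v\geq 0,\ u+v\leq 90\degree\}$ is attained on the edge $u+v=90\degree$, where $\cos u+\cos(90\degree-u)=\cos u+\sin u\geq 1$ since $(\cos u+\sin u)^2=1+\sin(2u)\geq 1$ for $u\in[0,90\degree]$; equality holds only at the corners $(u,v)\in\{(0,90\degree),(90\degree,0)\}$, which correspond precisely to the collinear configurations already handled. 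Combining, $p(\Delta')-p(\Delta)> ||\overline{ad}||\cdot(\cos u+\cos v)\geq ||\overline{ad}||\geq 1$. The main obstacle, I expect, is spotting the right projection: projecting $d$ perpendicularly onto the lines through $a,b$ and through $a,c$ (rather than onto $\overline{bc}$) is what converts the single hypothesis $\angle bac\leq 90\degree$ into the clean trigonometric inequality $\cos u+\cos v\geq 1$ under $u+v\leq 90\degree$; a direct attack via the law of cosines alone yields bounds that degenerate as $||\overline{ab}||$ or $||\overline{ac}||$ grows large.
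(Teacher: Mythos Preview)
Your proof is correct, and in the interior case it takes a genuinely different route from the paper's argument.

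For the collinear cases your treatment is essentially the same as the paper's. For the case where $a$ lies strictly inside $\Delta'$, the paper instead extends the rays from $b$ through $a$ and from $c$ through $a$ until they meet the opposite sides $\overline{cd}$ and $\overline{bd}$ of $\Delta'$ at points $p$ and $q$; the obtuse angles at $a$ give $||\overline{cp}||>||\overline{ca}||$ and $||\overline{bq}||>||\overline{ba}||$, so $p(\Delta')-p(\Delta)>||\overline{dp}||+||\overline{dq}||$. The paper then proves $||\overline{dp}||>||\overline{aq}||$ and $||\overline{dq}||>||\overline{ap}||$ via an auxiliary similar-triangles construction, and combines this with the two triangle inequalities $||\overline{ap}||+||\overline{dp}||>||\overline{ad}||$ and $||\overline{aq}||+||\overline{dq}||>||\overline{ad}||$ to conclude $||\overline{dp}||+||\overline{dq}||>1$.

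Your perpendicular-projection idea sidesteps the similar-triangles step entirely: dropping perpendiculars from $d$ onto the lines $ab$ and $ac$ immediately yields $||\overline{bd}||>||\overline{ab}||+||\overline{ad}||\,|\cos(\angle dab)|$ and the symmetric bound, so the whole problem reduces to the clean scalar inequality $\cos u+\cos v\geq 1$ on $\{u,v\geq 0,\ u+v\leq 90\degree\}$. This is shorter and isolates the geometric content in a single trigonometric fact, whereas the paper's argument stays purely synthetic but needs an extra auxiliary point and a similarity comparison. Two minor remarks: in the strictly interior case one actually has $\angle dab,\angle dac>90\degree$ strictly (equality would force collinearity), so $u,v\in(0,90\degree)$; and your justification that the minimum of $\cos u+\cos v$ lies on the edge $u+v=90\degree$ is correct but terse---it follows because, for fixed $u$, increasing $v$ decreases $\cos v$. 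Neither point affects the validity of the argument.
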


\begin{proof}
	Suppose first that $a$ lies in the interior of $\overline{cd}$. Since $\angle_a (\Delta)\leq 90\degree$, we have that $\angle_a (bad)\geq 90\degree$, hence $||\overline{bd}||>||\overline{ba}||$. It follows that $p(\Delta')- p(\Delta)= ||\overline{bd}||+||\overline{ad}||-||\overline{ba}||>1$. The case in which $a$ lies in the interior of $\overline{bd}$	is analogous.
	
	Suppose next that $a$ lies inside $\Delta'$. Let $p$ be the intersection point of the straight line through $a$ and $b$ with $\overline{cd}$, and	let $q$ be the intersection point of the straight line through $a$ and $c$ with $\overline{bd}$. Since $\angle_a (\Delta)\leq 90\degree$, we have that $\angle_a (cap)=\angle_a (baq)\geq 90\degree$, hence $||\overline{cp}||>||\overline{ca}||$ and $||\overline{bq}||>||\overline{ba}||$. It follows that $p(\Delta')-p(\Delta)>||\overline{dp}||+||\overline{dq}||$. 
	
	We claim that $||\overline{dp}||>||\overline{aq}||$. Let $r$ be the intersection point between $\overline{bd}$ and the line passing through $p$ that is parallel to the line through $a$ and $c$. The triangles $baq$ and $bpr$ are similar, hence $\angle_{p} (bpr)=\angle_{a} (baq)\geq  90\degree$. Thus, $\angle_{r} (bpr)<90\degree$ and $\angle_{r} (dpr)>90\degree$. It follows that $||\overline{dp}||> ||\overline{pr}||$; further, $||\overline{pr}||>||\overline{aq}||$, again by the similarity of the triangles $baq$ and $bpr$. This proves the claim. It can be analogously proved that $||\overline{dq}||>||\overline{ap}||$.
	
	By the triangular inequality, we have $||\overline{ap}||+||\overline{dp}||>||\overline{ad}||$ and $||\overline{aq}||+||\overline{dq}||>||\overline{ad}||$, hence $||\overline{ap}||+||\overline{dp}||+||\overline{aq}||+||\overline{dq}||>2||\overline{ad}||\geq 2$. Since $||\overline{dp}||>||\overline{aq}||$ and $||\overline{dq}||>||\overline{ap}||$, it follows that $||\overline{dp}||+||\overline{dq}||>1$.
\end{proof}

We now return the proof that $p(\Delta_{i})\geq p(\Delta_{i-1}) + \gamma$. Refer to Fig.~\ref{fig:largeangle-1}. Assume, w.l.o.g.\ up to a rotation of the Cartesian axes, that $\overline{b_{i-1}c_{i-1}}$ is horizontal, with $b_{i-1}$ to the right of $c_{i-1}$ and with $a_{i-1}$ above them. Let $\Delta'_{i-1}$ and $\Delta''_{i-1}$ be the triangles $a_ib_{i-1}c_{i-1}$ and $a_i b_i c_{i-1}$ in $\Gamma$, respectively. By Lemma~\ref{le:triangle-inside-triangle}, we have  $p(\Delta_{i})> p(\Delta''_{i-1})>p(\Delta'_{i-1})>p(\Delta_{i-1})$. 
\begin{figure}[htb]
	\centering
	\includegraphics[scale=1.2]{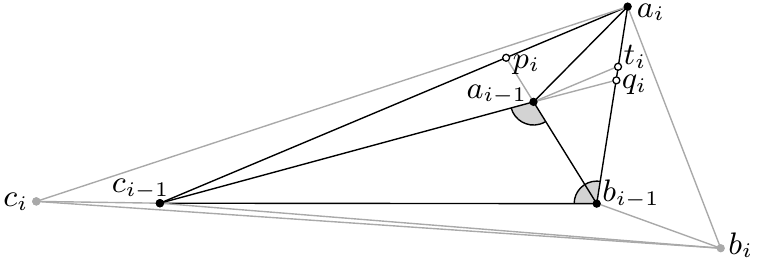}	
	\caption{Illustration for the proof that $p(\Delta_{i})\geq p(\Delta_{i-1}) + \gamma$.}
	\label{fig:largeangle-1}
\end{figure}
If $\angle_{a_{i-1}}(\Delta_{i-1})\leq 90\degree$, then by Lemma~\ref{le:lowerbound-small-angle} we have $p(\Delta'_{i-1})>p(\Delta_{i-1})+1$ and thus $p(\Delta_{i})>p(\Delta_{i-1})+1$ and we are done, as long as $\gamma\leq 1$. We can hence assume that $\angle_{a_{i-1}}(\Delta_{i-1})> 90\degree$; this implies that $a_{i-1}$ is to the left of the vertical line $\ell_{b}$ through $b_{i-1}$. Further, if $\angle_{b_{i-1}}(\Delta'_{i-1})\leq 90\degree$, then by Lemma~\ref{le:lowerbound-small-angle} we have $p(\Delta''_{i-1})>p(\Delta'_{i-1})+1$ and thus $p(\Delta_{i})>p(\Delta_{i-1})+1$ and we are done, as long as $\gamma\leq 1$. We can hence assume that $\angle_{b_{i-1}}(\Delta'_{i-1})> 90\degree$; this implies that $a_{i}$ is to the right of $\ell_{b}$. 

Let $p_i$ be the intersection point of the straight line through $a_{i-1}$ and $b_{i-1}$ with $\overline{c_{i-1}a_i}$, and let $q_i$ be the intersection point of the straight line through $a_{i-1}$ and $c_{i-1}$ with $\overline{b_{i-1}a_{i}}$. 

Assume first that $||\overline{a_iq_i}||\geq 0.4$. By Lemma~\ref{le:triangle-inside-triangle}, we have that $p(b_{i-1}c_{i-1}q_i)>p(\Delta_{i-1})$; further, since $\angle_{q_i}(c_{i-1}q_ia_i)>\angle_{b_{i-1}}(c_{i-1}b_{i-1}a_i)>90\degree$, we have that $||\overline{c_{i-1}a_i}||>||\overline{c_{i-1}q_i}||$, hence $p(\Delta'_{i-1})=p(b_{i-1}c_{i-1}q_i) + ||\overline{c_{i-1}a_i}|| + ||\overline{a_iq_i}|| - ||\overline{c_{i-1}q_i}||>p(\Delta_{i-1})+0.4$ and we are done, as long as $\gamma\leq 0.4$. 

Assume next that $||\overline{a_iq_i}||<0.4$. We show that this implies that $||\overline{a_ip_i}||\geq 0.4$. Suppose, for a contradiction, that $||\overline{a_ip_i}||< 0.4$. Consider the intersection point $t_i$ of $\overline{b_{i-1}a_i}$ with the line through $a_{i-1}$ parallel to the line through $c_{i-1}$ and $a_i$. Since $\angle_{q_i}(a_{i-1}q_it_i)=\angle_{q_i}(c_{i-1}q_ia_i)>90\degree$, we have $||\overline{a_{i-1}q_i}||<||\overline{a_{i-1}t_i}||$. Further, by the similarity of the triangles $b_{i-1}a_ip_i$ and $b_{i-1}t_ia_{i-1}$, we have $||\overline{a_{i-1}t_i}||<||\overline{a_{i}p_i}||$. Hence, $||\overline{a_{i-1}q_i}||<0.4$. Then the triangular inequality implies that $||\overline{a_{i-1}a_i}||< ||\overline{a_{i-1}q_i}||+||\overline{a_{i}q_i}||<0.8$, while $||\overline{a_{i-1}a_i}||\geq 1$, given that $a_{i-1}a_i$ is an edge of $G_i$, a contradiction. We can hence assume that $||\overline{a_ip_i}||\geq 0.4$. 

For the sake of simplicity of notation, let $x= ||\overline{b_{i-1}a_i}||$, $y=||\overline{a_ip_i}||$, and $z=||\overline{b_{i-1}p_i}||$. By Lemma~\ref{le:triangle-inside-triangle}, we have $p(b_{i-1}c_{i-1}p_i)>p(\Delta_{i-1})$, hence $p(\Delta'_{i-1})-p(\Delta_{i-1})>x+y-z$.  Note that $x\geq 1$, since $b_{i-1}a_i$ is an edge of $G_i$, and $y\geq 0.4$, by assumption. Recall that $\angle_{b_{i-1}}(\Delta'_{i-1})> 90\degree$, which implies that $\angle_{a_i}(\Delta'_{i-1})=\angle_{a_i}(b_{i-1}a_ip_i) < 90\degree$. By the Pythagorean Inequality, we have that $z<\sqrt{x^2+y^2}$, hence $p(\Delta'_{i-1})-p(\Delta_{i-1})>x+y-\sqrt{x^2+y^2}$. The derivative $\frac{\partial(x+y-\sqrt{x^2+y^2})}{\partial x}=\frac{\sqrt{x^2+y^2}-x}{\sqrt{x^2+y^2}}$ is positive for every real value of $x$ and $y$; the same is true for the derivative $\frac{\partial(x+y-\sqrt{x^2+y^2})}{\partial y}$. Hence, the minimum value of $x+y-\sqrt{x^2+y^2}$ is achieved when $x$ and $y$ are minimum, that is, when $x=1$ and $y=0.4$. With such values we get $x+y-\sqrt{x^2+y^2}> 0.32$. Hence, $p(\Delta'_{i-1})>p(\Delta_{i-1})+0.32$ and we are done, as long as $\gamma\leq 0.32$.  

By picking $\gamma=0.3$, we conclude the proof that $p(\Delta_{i})\geq p(\Delta_{i-1}) + \gamma$, which implies that $p(\Delta_{k})\in \Omega(k)$ and hence that $\rho(\Gamma)\in \Omega(k)$. 

{\bf Allowing for a different outer face.} Finally, we show how the above lower bound on the edge-length ratio of planar straight-line drawings in which the outer face is delimited by a prescribed cycle can be used in order to prove Theorem~\ref{th:lower-bound}. Consider the complete graph $K_4$ on four vertices, say $a$, $b$, $c$, and $d$. Further, consider two copies $G'_k$ and $G''_k$ of the $3k$-vertex plane graph $G_k$; let $\mathcal C'_k$ and $\mathcal C''_k$ denote the copies of the cycle $\mathcal C_k$ in $G'_k$ and $G''_k$, respectively. Glue $G'_k$ and $G''_k$ with $K_4$ by identifying the $3$-cycle $abc$ with $\mathcal C'_k$ and the $3$-cycle $abd$ with $\mathcal C''_k$. Denote by $G$ the resulting $n$-vertex planar graph and note that $G$ has $6k-2$ vertices. In any planar drawing $\Gamma$ of $G$, the planar drawing of $G'_k$ has its outer face delimited by $\mathcal C'_k$ or the planar drawing of $G''_k$ has its outer face delimited by $\mathcal C''_k$, hence $\rho(\Gamma)\in \Omega(k)$. The proof of Theorem~\ref{th:lower-bound} is concluded by observing that $k\in \Omega(n)$.   

\section{Upper Bounds for Planar Graph Classes}\label{se:classes}

In this section we prove upper bounds for the planar edge-length ratio of various families of planar graphs. 

\subsection{Plane $\bf 3$-Trees} \label{se:plane-3-trees}

A {\em plane $3$-tree} is a maximal plane graph that can be constructed as follows. The only plane $3$-tree with $3$ vertices is a $3$-cycle embedded in the plane. For $n\geq 4$, an $n$-vertex plane $3$-tree $G$ is obtained from an $(n-1)$-vertex plane $3$-tree $G'$ by inserting a vertex $v$ inside an internal face $f$ of $G'$ and by connecting $v$ to the three vertices of $G'$ incident to $f$. See Figure~\ref{fig:3-trees}(a). A \emph{planar $3$-tree} is a planar graph that admits a plane embedding as a \mbox{plane $3$-tree.}

\begin{figure}[htb]\tabcolsep=4pt
	\centering
	\begin{tabular}{c c}
		\includegraphics[scale=1.25]{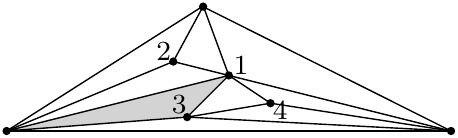} \hspace{1mm} &
		\includegraphics[scale=1.25]{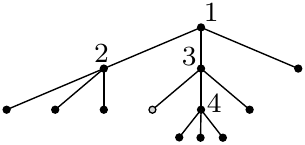}\\
		(a) \hspace{1mm} & (b)
	\end{tabular}
	\caption{(a) A plane $3$-tree $G$. (b) The tree $T_G$ associated with $G$. The leaf of $T_G$ representing the gray face of $G$ is also gray.}
	\label{fig:3-trees}
\end{figure}

An $n$-vertex plane $3$-tree $G$ is naturally associated with a rooted ternary tree $T_G$ whose internal nodes represent the internal vertices of $G$ and whose leaves represent the internal faces of $G$ ($T_G$ is called {\em representative tree} of $G$ in~\cite{DBLP:journals/jgaa/MondalNRA11}). Formally, $T_G$ is defined as follows; refer to Figure~\ref{fig:3-trees}(b). If $n=3$, then $T_G$ consists of a single node, representing the unique internal face of $G$. If $n>3$, then let $G'$ be a plane $3$-tree such that $G$ can be obtained by inserting a vertex $v$ inside an internal face $f$ of $G'$ and by connecting $v$ to the three vertices of $G'$ incident to $f$. Let $t_f$ be the leaf representing $f$ in $T_{G'}$. Then $T_{G}$ is obtained from $T_{G'}$ by inserting three new leaves as children of $t_f$. In $T_{G}$, the node $t_f$ represents $v$ and its children represent the faces of $G$ incident to $v$. The {\em depth} of $T_G$ is the maximum number of nodes in any root-to-leaf path in $T_{G}$. The \emph{depth} of $G$ is the depth of $T_{G}$. Note that the depth of an $n$-vertex plane $3$-tree is in $\Omega(\log n)$ and in $O(n)$. We have the following.

\begin{theorem} \label{th:planar-3-trees}
	Every plane $3$-tree with depth $k$ has planar edge-length \mbox{ratio in $O(k)$.}
\end{theorem}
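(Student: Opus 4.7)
My plan is to prove Theorem~\ref{th:planar-3-trees} by induction on the depth $k$, exploiting the recursive structure revealed by the representative tree $T_G$. Let $v$ denote the root of $T_G$, i.e.\ the first internal vertex inserted in $G$; then $v$ is adjacent to the three outer vertices $a, b, c$, and removing $v$ partitions $G$ into three plane $3$-trees $G_{ab}, G_{bc}, G_{ca}$ with outer faces $avb$, $bvc$, $cva$, each of depth at most $k-1$.

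In order for the induction to compose, I would strengthen the statement so as to control the shape of the triangle into which the plane $3$-tree is drawn. A natural strengthening is: \emph{there exist absolute constants $c$ and $\rho^{*}$ such that, for every plane $3$-tree $G$ of depth $k$ and every triangle $T$ with $\rho(T)\leq\rho^{*}$, $G$ admits a planar straight-line drawing with outer face $T$ in which every edge has length between $s_{\min}(T)/(ck)$ and $s_{\max}(T)$}, where $s_{\min}(T)$ and $s_{\max}(T)$ are the shortest and longest sides of $T$. The resulting edge-length ratio is at most $ck\,\rho(T)\in O(k)$; applied with an equilateral outer face, this yields Theorem~\ref{th:planar-3-trees}.

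The base case $k=1$ is immediate since $G$ coincides with $T$. For the inductive step, the plan is to place $v$ inside $T$ so that (i) each of $||\overline{va}||, ||\overline{vb}||, ||\overline{vc}||$ lies between $s_{\min}(T)/(ck)$ and $s_{\max}(T)$, and (ii) each of the sub-triangles $avb$, $bvc$, $cva$ has edge-length ratio at most $\rho^{*}$. Given such a $v$, the inductive hypothesis applies to each of $G_{ab}, G_{bc}, G_{ca}$ (of depth at most $k-1$) inside its sub-triangle; pasting the three sub-drawings together with $v$ and its three incident edges produces a planar straight-line drawing of $G$ with outer face $T$. Every edge of the composed drawing either lies inside one of the sub-drawings (length controlled by induction applied to the corresponding sub-triangle), or is one of $\overline{va}, \overline{vb}, \overline{vc}$ (handled by~(i)), or is a side of $T$ (length in $[s_{\min}(T), s_{\max}(T)]$); hence the range $[s_{\min}(T)/(ck), s_{\max}(T)]$ holds throughout.

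The crux of the argument, and the main obstacle, is the geometric existence claim at the inductive step. Conditions (i) and (ii) pull in opposite directions: keeping $v$ far from each of $a, b, c$ pushes $v$ towards the interior of $T$, but preventing any sub-triangle from becoming too ill-shaped also constrains the position of $v$ relative to the sides of $T$. A key subtlety is that natural ``centroid-like'' placements tend to make the sub-triangles more distorted than $T$ itself (for instance, placing $v$ at the centroid of an equilateral triangle produces isoceles sub-triangles of ratio $\sqrt{3}$), so the class of well-shaped triangles must be chosen large enough to be closed, up to a small loss, under such subdivisions. I would attempt to establish this by placing $v$ at a canonical center of $T$ (the incenter is natural, since its distances to the three vertices admit clean trigonometric expressions in the angles of $T$), verifying by direct calculation that, provided $\rho(T)\leq\rho^{*}$ for a sufficiently large $\rho^{*}$, all three sub-triangles stay within the same shape class and the three new edges are long enough, and finally invoking a compactness argument on the space of triangle shapes with $\rho(T)\leq\rho^{*}$ to obtain uniform constants $c$ and $\rho^{*}$.
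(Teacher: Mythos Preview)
Your induction does not close. Applying the hypothesis to a sub-triangle $T'$ (of depth at most $k-1$) yields edge lengths at least $s_{\min}(T')/(c(k-1))$; for this to be at least $s_{\min}(T)/(ck)$ you need $s_{\min}(T') \ge s_{\min}(T)\,(k-1)/k$. But two sides of $T'$ are among $\overline{va},\overline{vb},\overline{vc}$, and for \emph{every} interior point $v$ of an equilateral triangle of side $s$ one has $\min(||\overline{va}||,||\overline{vb}||,||\overline{vc}||)\le s/\sqrt{3}$; hence $s_{\min}(T')\le s/\sqrt{3}$ and the required inequality already fails for all $k\ge 3$. In other words, each level of the recursion can only guarantee a multiplicative shrinkage of $s_{\min}$, so the scheme gives an exponential edge-length ratio, not $O(k)$. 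Your condition~(i), which only demands $||\overline{va}||\ge s_{\min}(T)/(ck)$, is far too weak to prevent this. Condition~(ii) is separately in trouble: for any $\rho^*\ge 2$ the class $\{\rho(T)\le\rho^*\}$ contains the near-degenerate isoceles triangle with sides $1,1,2-\epsilon$, and both the incenter and the centroid of that triangle produce a sub-triangle whose edge-length ratio is $\Theta(1/\sqrt{\epsilon})$; so the proposed ``direct calculation'' cannot succeed, and the compactness argument is moot because this shape space is not compact.

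The paper avoids both obstacles by abandoning shape control entirely. It draws $G$ inside a very flat triangle (with $y$-extension an arbitrarily small $\epsilon$) and tracks only $x$-extensions: each face $f$ is delimited by a triangle with one side of $x$-extension exactly $1$ and the other two of $x$-extension at least $k_f$ and $k_f+1$, where $k_f$ is the depth of the subtree of $T_G$ rooted at $f$. Inserting the next vertex so that one new edge has $x$-extension exactly $1$ decreases this budget \emph{additively} (by $1$) rather than multiplicatively, and that is precisely what yields the $O(k)$ bound.
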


\begin{proof}
	The \emph{$x$-extension} (the \emph{$y$-extension}) of a triangle $\Delta$ is the difference between the maximum and the minimum $x$-coordinate (resp.\ $y$-coordinate) of a vertex of $\Delta$. The $x$-extension and the $y$-extension of a segment are defined analogously. 
	
	Let $G$ be any plane $3$-tree with depth $k$. Fix any constant $\epsilon>0$ and represent the $3$-cycle $\cal C$ delimiting the outer face of $G$ as any triangle $\Delta$ whose $y$-extension is $\epsilon$ and whose three sides have $x$-extension equal to $1$, $k$, and $k+1$. 
	
	Now assume that we have constructed a drawing $\Gamma'$ of a plane $3$-tree $G'$ which is a subgraph of $G$ that includes $\cal C$. Assume that $\Gamma'$ satisfies the following invariant: every internal face $f$ of $G'$ is delimited by a triangle that has one side whose $x$-extension is equal to $1$, one side whose $x$-extension is greater than or equal to $k_f$, and one side whose $x$-extension is greater than or equal to $k_f+1$, where $k_f$ is the depth of the subtree of $T_G$ rooted at the node corresponding to $f$. Initially, this is the case with $G'=\mathcal C$ and $\Gamma'=\Delta$; note that the only internal face of $G'$ corresponds to the root of $T_G$, which has depth $k$.
	
	\begin{figure}[htb]\tabcolsep=4pt
		\centering
		\includegraphics[scale=1.25]{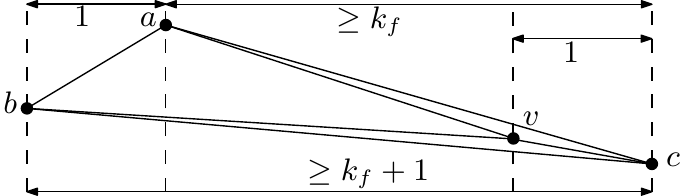}
		\caption{Inserting a vertex $v$ of $G$ in a face $f$ of $G'$.}
		\label{fig:3-trees-construction}
	\end{figure}
	
	Let $t_f$ be any leaf of $T_{G'}$ which is not a leaf of $T_G$. Let $f$ be the internal face of $G'$ represented by $t_f$ in $T_{G'}$, let $\Delta_f$ be the triangle representing $f$ in $\Gamma'$, let $abc$ be the $3$-cycle delimiting $f$, and let $v$ be the internal vertex of $G$ represented by $t_f$ in $T_{G}$; see Figure~\ref{fig:3-trees-construction}. By the invariant, we can assume that the $x$-extensions of $\overline{ab}$, $\overline{ac}$, and $\overline{bc}$ are equal to $1$, greater than or equal to $k_f$, and greater than or equal to $k_f+1$, respectively. Place $v$ inside $f$ in $\Gamma'$ so that the $x$-extension of $\overline{vc}$ is equal to $1$ and draw the edges $va$, $vb$, and $vc$ as straight-line segments. This results in a planar straight-line drawing $\Gamma''$ of a plane $3$-tree $G''$ which is a subgraph of $G$ and which has one more vertex than $G'$. Note that the invariant is satisfied by $\Gamma''$; in particular, $\overline{av}$ and $\overline{bv}$ have $x$-extension greater than or equal to $k_f-1$ and greater than or equal to $k_f$, respectively, hence each face $f'$ of $G''$ incident to $v$ is delimited by a triangle whose sides have  the desired $x$-extensions, given that the subtree of $T_G$ rooted at the node corresponding to $f'$ has depth $k_f-1$.
	
	Eventually, we get a planar straight-line drawing of $G$ such that every edge has length at least $1$, given that it has $x$-extension greater than or equal to $1$, and at most $k+1+\epsilon\in O(k)$, given that it has $x$-extension smaller than or equal to $k+1$ and $y$-extension smaller than or equal to $\epsilon$.
\end{proof}

The bound in Theorem~\ref{th:planar-3-trees} is tight. Namely, Theorem~\ref{th:lower-bound} shows that there exists a plane $3$-tree that has depth $k$ and for which any planar straight-line drawing has edge-length ratio in $\Omega(k)$ (even if the drawing is not required to respect the prescribed plane embedding). Further, Theorem~\ref{th:planar-3-trees} implies that any {\em balanced} $n$-vertex plane $3$-tree, i.e., a plane $3$-tree $G$ such that $T_G$ is a balanced tree, has planar edge-length ratio in $O(\log n)$.

\subsection{$\bf 2$-Trees} \label{se:planar-2-trees}

For any integer $n\geq 2$, an $n$-vertex {\em $2$-tree} $G$ is a graph whose vertex set has an ordering $v_1,v_2,\dots,v_n$ such that $v_1v_2$ is an edge of $G$, called {\em root} of $G$, and, for $i=3,\dots,n$, the vertex $v_i$ has exactly two neighbors $p(v_i)$ and $q(v_i)$ in $\{v_1,v_2,\dots,v_{i-1}\}$, where $p(v_i)$ and $q(v_i)$ are adjacent in $G$. The vertices $v_3,v_4,\dots,v_n$, i.e., the vertices of $G$ not in its root are called \emph{internal}. For an edge $v_iv_j$ of $G$, an \emph{apex} of $v_iv_j$ is a vertex $v_k$, with $k>i$ and $k>j$, such that $p(v_k)=v_i$ and $q(v_k)=v_j$; further, the \emph{side edges} of $v_iv_j$ are all the edges $v_iv_k$ and $v_jv_k$ such that $v_k$ is an apex of $v_iv_j$; finally, an edge $v_iv_j$ is \emph{trivial} if it has no apex, otherwise it is \emph{nontrivial}. Most of this section is devoted to a proof of the following theorem.


\begin{theorem} \label{th:2-trees}
	Every $n$-vertex $2$-tree has planar edge-length ratio in $O(n^{\log_2 \phi})\subseteq \areaSP$, where $\phi=\frac{1+\sqrt 5}{2}$ is the golden ratio.
\end{theorem}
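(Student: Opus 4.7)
The plan is to construct, recursively, a planar straight-line drawing of any $2$-tree with edge-length ratio $O(n^{\log_2 \phi})$, and to analyze its growth via a recurrence whose solution has exactly this rate. The golden ratio appears naturally because $\log_2 \phi$ is the unique exponent $c$ solving $(1/2)^c + (1/4)^c = 1$ (since setting $y = 2^{-c}$ gives $y + y^2 = 1$, i.e., $y = 1/\phi$); this identifies the target as a recurrence of the form $R(n) \le R(n_1) + R(n_2) + O(1)$ with size bounds $n_1 \le n/2$ and $n_2 \le n/4$.

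The combinatorial heart of the proof is a decomposition lemma of the following flavour: every $n$-vertex $2$-tree $G$ can be split, by choosing an appropriate root edge, into two sub-$2$-trees $G_1, G_2$ sharing only one edge and with sizes $n_1, n_2$ obeying Fibonacci-balanced bounds, e.g.\ $n_1 \le n/2$ and $n_2 \le n/4$ (or more generally $\alpha_1 n, \alpha_2 n$ with $\alpha_1^{\log_2 \phi} + \alpha_2^{\log_2 \phi} \le 1$). I would prove this by carefully choosing between splitting at an apex of the current root edge and re-rooting to an internal edge that acts as a balanced separator in the tree of triangles underlying the $2$-tree, mirroring a centroid-type argument but adapted to the tighter balance required by the Akra--Bazzi condition.

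The geometric realization proceeds inductively with an invariant of the form: for any $2$-tree $H$ with $m$ vertices rooted at an edge $f$, and for any segment $\overline{ab}$ of length $\ell$ with a chosen side, there is a planar straight-line drawing of $H$ that places $f$ on $\overline{ab}$, is contained in a prescribed slim triangular region on the chosen side of $\overline{ab}$, and has every edge of length in $[\ell,\, C\cdot\ell\cdot m^{\log_2 \phi}]$ for a suitable constant $C$. For the inductive step I would apply the decomposition lemma to obtain $G_1, G_2$, draw each on the appropriate side of their common edge after rescaling to the same length, and verify that the union is planar and that the edge-length bounds compose as $R(n) \le R(n_1) + R(n_2) + O(1)$, yielding the stated bound by Akra--Bazzi.

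The main obstacle will be establishing the decomposition lemma with precisely the Fibonacci balance. A priori, $2$-trees can be quite irregular---for example a long fan of triangles on a single edge, or a deep chain of nested triangles---so the existence of such a split cannot rely on naive centroid arguments for trees. I expect to use an amortized analysis in which several locally unbalanced recursive calls are grouped together to produce an overall balanced recurrence, and to case-split on whether a heavy sub-$2$-tree sits on a single apex or is spread across several apices of the current root edge. A secondary challenge is managing the wedge geometry so that the two recursive subdrawings fit inside the allocated slim region without violating planarity and without stretching or shrinking any edge beyond the permitted interval; the opening angles and placement of the shared edge will have to be tuned to keep the constant $C$ intact across the recursion.
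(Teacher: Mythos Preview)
Your identification of the target recurrence and of why $\log_2\phi$ is the right exponent is correct and matches the paper. However, there is a real gap in how an \emph{additive} recurrence would emerge from a binary split. If you cut $G$ at a shared edge $e$ into $G_1$ (containing the root $v_1v_2$) and $G_2$, and draw $G_1$ first with its root at length $\ell$, your invariant only guarantees that $e$ has length somewhere in $[\ell,\,C\ell\, n_1^{\log_2\phi}]$; it does not guarantee that $e$ is long enough to serve as the base for the recursive drawing of $G_2$. Gluing this way naturally yields $R(n)\le R(n_1)\cdot R(n_2)$, not a sum, and ``rescaling to the same length'' is not available since the shared edge must have one length in the combined drawing. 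Separately, the decomposition you propose (a single edge giving pieces of sizes $\le n/2$ and $\le n/4$) already fails on a fan, i.e., a $2$-tree whose tree of triangles is a star, as you essentially concede.

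The paper resolves both issues at once by not splitting on a single edge. It extracts an entire heavy path of triangles as a spine $H$ (called a \emph{linear $2$-tree}: a $2$-tree in which every edge has at most one nontrivial side edge), and $3$-colors the vertices of $H$ so that every edge of $H$ is of class $1$-$2$, $1$-$3$, or $2$-$3$. The spine is drawn with all class-$1$ vertices clustered near $a_1$, all class-$2$ vertices near $a_2$, and all class-$3$ vertices near a single point on $\overline{a_1a_2}$ at distance at least $f(x)$ from $a_1$ and at least $f(y)$ from $a_2$; the existence of such a point is precisely the additive constraint $f(x)+f(y)\le f(n)=||\overline{a_1a_2}||$. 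Because $H$ is built by always following the heaviest side edge, every $H$-component of class $1$-$3$ has size at most $x\le n/2$ and every one of class $2$-$3$ has size at most $y\le (n-x)/2$ (or the symmetric statement); this is Chan's ordered-tree decomposition, and it guarantees that each component's root edge in the drawing of $H$ is already long enough for the recursion. Your ``amortized grouping of unbalanced calls'' is, once made precise, exactly this heavy-path extraction; the ingredients you are missing are the $3$-class coloring and the specific near-collinear geometry of the spine that converts the recursion into an additive one.
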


In the following, we first define a family of $2$-trees, which we call \emph{linear $2$-trees}, and show that they admit drawings with constant edge-length ratio (see Section~\ref{subsubse:linear2trees}). We will later show how to find, in any $2$-tree $G$, a subgraph which is a linear $2$-tree and whose removal splits $G$ into ``small'' components. This decomposition, together with the drawing algorithm for linear $2$-trees, will be used in order to construct a planar straight-line drawing of $G$ with sub-linear edge-length ratio \mbox{(see Section~\ref{subsubse:general2trees}).}

\subsubsection{Linear $\bf 2$-Trees} \label{subsubse:linear2trees}

A {\em linear $2$-tree} is a $2$-tree such that every edge has at most one nontrivial side edge; see Figure~\ref{fig:linear-2-trees}(a). We now classify the vertices of a linear $2$-tree $H$ into vertices of class $1$, class $2$, and class $3$, so that every edge of $H$ has its end-vertices in different classes. First, $v_1$ and $v_2$ are vertices of class $1$ and class $2$, respectively, where $v_1v_2$ is the root of $H$. Now we repeatedly consider an edge $uv$ of $H$ such that $u$ and $v$ have been already classified and the apexes of $uv$ have not been classified yet. We let every apex be in the unique class different from the classes of $u$ and $v$. Based on the classification of the vertices of $H$, we also classify the edges of $H$ into edges of class $1$-$2$, class $1$-$3$, and class $2$-$3$, where an edge is of class $a$-$b$ if its end-vertices are of classes $a$ and $b$.

\begin{figure}[htb]\tabcolsep=4pt
	\centering
	\begin{tabular}{c c c}
		\includegraphics[scale=1.25]{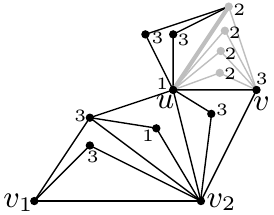} &
		\includegraphics[scale=1.25]{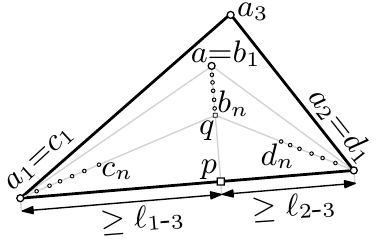} &
		\includegraphics[scale=1.25]{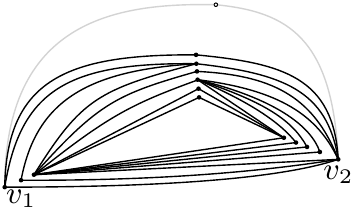} \\
		(a) & (b) & (c)
	\end{tabular}
	\caption{(a) A linear $2$-tree $H$. The apexes and the side edges of the edge $uv$ are gray; the only nontrivial side edge of $uv$ is represented by a thick line segment. The numbers show the classes of the vertices. (b) The points $b_1,b_2,\dots, b_n,q,c_1,c_2,\dots,c_n,d_1,d_2,\dots,d_n$ inside $a_1a_2a_3$ (for the sake of readability, there are fewer points than there should be). (c) The drawing of $H$ constructed by the algorithm L2T-drawer (for the sake of readability, some edges are represented \mbox{by nonstraight curves).}}
	\label{fig:linear-2-trees}
\end{figure}

We now give an algorithm, called {\bf L2T-drawer}, that constructs a planar straight-line drawing $\Gamma_H$ of a linear $2$-tree $H$. The algorithm L2T-drawer receives as input a triangle $a_1a_2a_3$ and three real values $\ell_{1\textrm{-}2},\ell_{1\textrm{-}3},\ell_{2\textrm{-}3} \geq 1$ such that $\ell_{1\textrm{-}3} + \ell_{2\textrm{-}3}\leq ||\overline{a_1a_2}||$ and such that $\ell_{1\textrm{-}2}<||\overline{a_1a_2}||$. The algorithm constructs a planar straight-line drawing $\Gamma_H$ of $H$ such that the following \mbox{properties are satisfied:}

\begin{enumerate}[(L1)]
	\item for $i=1,2$, the vertex $v_i$ lies at $a_i$;
	\item every internal vertex of $H$ lies inside $a_1a_2a_3$; and
	\item the length of every edge of class $x$-$y$ is at least $\ell_{x\textrm{-}y}$, for each $x\textrm{-}y\in \{1\textrm{-}2,1\textrm{-}3,2\textrm{-}3\}$.
\end{enumerate}


Refer to Figure~\ref{fig:linear-2-trees}(b). Let $a$ be a point inside $a_1a_2a_3$ such that the line through $a$ orthogonal to $\overline{a_1a_2}$ intersects $\overline{a_1a_2}$ in a point $p$ with $||\overline{a_1p}||\geq \ell_{1\textrm{-}3}$ and $||\overline{a_2p}||\geq \ell_{2\textrm{-}3}$; this exists because of the assumption $\ell_{1\textrm{-}3} + \ell_{2\textrm{-}3}\leq ||\overline{a_1a_2}||$. Let $\epsilon = \min \{||\overline{aa_1}||-\ell_{1\textrm{-}3},||\overline{aa_2}||-\ell_{2\textrm{-}3}, ||\overline{a_1a_2}||-\ell_{1\textrm{-}2}\}$ and note that $\epsilon>0$. Let $b_1,b_2,\dots,b_n,q$ be $n+1$ points on the straight-line segment $\overline{ap}$, in this order from $a$ to $p$, such that $||\overline{aq}||\leq \frac{\epsilon}{3}$. Further, let $c_1=a_1,c_2,\dots,c_n$ be $n$ points on the straight-line segment $\overline{a_1q}$, in this order from $a_1$ to $q$, such that $||\overline{a_1c_n}||\leq \frac{\epsilon}{3}$. Finally, let $d_1=a_2,d_2,\dots,d_n$ be $n$ points on the straight-line segment $\overline{a_2q}$, in this order from $a_2$ to $q$, such that $||\overline{a_2d_n}||\leq \frac{\epsilon}{3}$.

The algorithm L2T-drawer works as follows. Refer to Figure~\ref{fig:linear-2-trees}(c). We initialize $\Gamma_H$ by drawing the root $v_1v_2$ of $H$ as the straight-line segment $\overline{a_1a_2}$, where $a_i$ represents $v_i$, for $i=1,2$. Now L2T-drawer proceeds in steps. During one step, all the apexes and side edges of a single nontrivial edge of $H$ are drawn. The algorithm maintains the invariant that, before each step, $\Gamma_H$ is a planar straight-line drawing of an $m$-vertex subgraph $H_m$ of $H$ such that the following properties are satisfied for some integers $j,k,l$ with $m=j+k+l$: 

\begin{enumerate}[(i)]
	\item the vertices of $H_m$ of classes $1$, $2$, and $3$ are drawn at the points $c_1,\dots,c_k$, at the points $d_1,\dots,d_l$, and at the points $b_1,\dots,b_j$, respectively;
	\item if $H_m$ does not coincide with $H$, then there is exactly one edge $e_m$ that is a nontrivial edge of $H$, that is in $H_m$, and whose apexes are not in $H_m$; and
	\item if $H_m$ does not coincide with $H$, then the end-vertices of $e_m$ lie at $b_j$ and $c_k$, or at $c_k$ and $d_l$, or at $b_j$ and $d_l$. 
\end{enumerate}

The invariant is indeed satisfied after the initialization of $\Gamma_H$ to a drawing of $v_1v_2$, with $m=2$, $k=l=1$, and $j=0$. 

We now perform one step. By Property~(ii), if $H_m$ does not coincide with $H$, then there is exactly one edge $e_m$ that is a nontrivial edge of $H$, that is in $H_m$, and whose apexes are not in $H_m$. Assume that $e_m$ is a $1$-$2$ edge and hence, by Properties~(i) and~(iii), its end-vertices lie at $c_k$ and $d_l$; the other cases are analogous. If $j>0$, then Property~(i) implies that $e_m$ is on the boundary of the triangle $b_j c_k d_l$, while the rest of $\Gamma_H$ is in the closure of the exterior of $b_j c_k d_l$ (if $j=0$, then $\Gamma_H$ consists only of the drawing of the edge $e_m=v_1v_2$). Draw the $x\geq 1$ apexes of $e_m$, which are vertices of class $3$, at the points $b_{j+1},\dots,b_{j+x}$, so that the only nontrivial side edge $e_{m+x}$ of $e_m$, if any, is incident to the apex drawn at $b_{j+x}$. Draw the side edges of $e_m$ as straight-line segments. If $j=0$, then, by construction, the side edges of $e_m$ do not cross each other and do not cross $e_m$. If $j>0$, then the points $b_{j+1},\dots,b_{j+x}$ are internal to the triangle $b_j c_k d_l$, and thus so are the side edges of $e_m$, which hence do not cross the rest of $\Gamma_H$. It follows that, after this step, $\Gamma_H$ is a planar straight-line drawing of an $(m+x)$-vertex subgraph $H_{m+x}$ of $H$ satisfying the invariant. In particular, at most one side edge $e_{m+x}$ of $e_m$ is nontrivial in $H$, given that $H$ is a linear $2$-tree. 


Eventually, the algorithm constructs a planar straight-line drawing $\Gamma_H$ of $H$. By construction, the vertices $v_1$ and $v_2$ are placed at $a_1$ and $a_2$, respectively, hence $\Gamma_H$ satisfies property (L1). Further, the vertices of $H$ different from $v_1$ and $v_2$ are placed at the points $b_1,b_2,\dots,b_n,c_2,c_3,\dots,c_n,$ $d_2,d_3,\dots,d_n$, which are inside $a_1a_2a_3$, by construction, hence $\Gamma_H$ satisfies property (L2). Finally, we prove that $\Gamma_H$ satisfies property (L3). Consider any edge of class $1$-$3$, which is represented by a straight-line segment $\overline{c_kb_j}$ in $\Gamma_H$. By the triangular inequality we have $||\overline{c_kb_j}||\geq ||\overline{aa_1}||-||\overline{a_1c_k}||-||\overline{ab_j}||\geq \ell_{1\textrm{-}3}+\epsilon-\frac{2\epsilon}{3}>\ell_{1\textrm{-}3}$. It can be analogously proved that any edge of class $2$-$3$ has length larger than $\ell_{2\textrm{-}3}$ and that any edge of class $1$-$2$ has length larger than $\ell_{1\textrm{-}2}$ in $\Gamma_H$. 

%
%
%
%
%

\subsubsection{General $\bf 2$-Trees} \label{subsubse:general2trees}

We now deal with general $2$-trees. Let $G$ be a $2$-tree and let $v_1v_2$ be its root. Consider any subgraph $H$ of $G$ that is a linear $2$-tree and that has $v_1v_2$ as its root. For any edge $uv$ of $H$ we define an \emph{$H$-component} $G_{uv}$ of $G$ as follows. Remove from $G$ the vertices of $H$ and their incident edges; this splits $G$ into several connected components and we let $G_{uv}$ be the $2$-tree which is the subgraph of $G$ induced by $u$, by $v$, and by the vertex sets of the connected components containing a vertex adjacent to both $u$ and $v$. See Figure~\ref{fig:2-trees}(a). The edge $uv$ is the root of $G_{uv}$. An $H$-component of $G$ is of \emph{class}~$1$-$2$,~$1$-$3$, or~$2$-$3$ if its root is of class~$1$-$2$,~$1$-$3$, or~$2$-$3$, respectively. \mbox{See Figure~\ref{fig:2-trees}(b).}

\begin{figure}[htb]\tabcolsep=4pt
	\centering
	\begin{tabular}{c c}
		\includegraphics[scale=1.2]{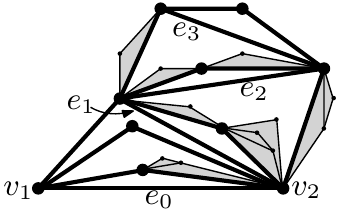}&
		\includegraphics[scale=1.2]{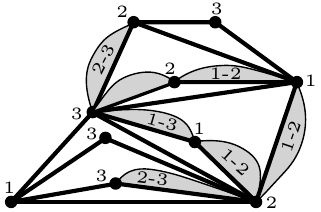}\\
		(a) & (b)
	\end{tabular}
	\caption{(a) A $2$-tree $G$ and a subgraph $H$ of $G$ which is a linear $2$-tree; the vertices and edges of $H$ are represented by larger disks and thicker lines, respectively. The $H$-components of $G$ are shown within shaded regions. (b) A schematic representation of $G$ and $H$; each vertex of $H$ and each $H$-component of $G$ is labeled with its class.}
	\label{fig:2-trees}
\end{figure}

For technical reasons, we let $n$ be the number of vertices of $G$ \emph{minus one}. The plan is: (1) to find a subgraph $H$ of $G$ that is a linear $2$-tree, that has $v_1v_2$ as its root, and such that every $H$-component of $G$ has ``few'' internal vertices; (2) to construct a planar straight-line drawing $\Gamma_H$ of $H$ by means of the algorithm \mbox{L2T-drawer}; and (3) to recursively draw each $H$-component independently, plugging such drawings into $\Gamma_H$, thus obtaining a drawing of $G$. We start with the following lemma, which draws inspiration from a technique for decomposing ordered binary trees proposed by Chan~\cite{DBLP:journals/algorithmica/Chan02}.

\begin{lemma}\label{le:h-component}
	There exists a subgraph $H$ of $G$ that is a linear $2$-tree, that has $v_1v_2$ as its root, and that satisfies the following property. Let $x$, $y$, and $z$ be the maximum number of vertices of an $H$-component of $G$ of class $1$-$3$, $2$-$3$, and $1$-$2$, respectively, minus one. Then $z\leq \frac{n}{2}$; further, we have that: (i) $x\leq \frac{n}{2}$ and $y\leq \frac{n-x}{2}$, or (ii) $y\leq \frac{n}{2}$ and $x\leq \frac{n-y}{2}$.
\end{lemma}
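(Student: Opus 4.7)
The plan is to construct $H$ by a greedy heavy-path decomposition of $G$, in the spirit of Chan's technique that the paper refers to. I would root $G$ at $v_1v_2$ and, for each edge $e=uv$ of $G$, define the subtree $T(e)$ of $G$ rooted at $e$ recursively: $T(e)$ contains $e$ and, for every apex $w$ of $e$ in $G$, the subtrees $T(uw)$ and $T(vw)$ (which are vertex-disjoint except at $w$). Let $N(e):=|V(T(e))|$; for each apex $w$ of $e$, let $\mu(e,w):=N(uw)+N(vw)-3$, which is the number of vertices that $w$'s descendants contribute below $e$. Note that $N(v_1v_2)=n+1$.

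The construction maintains a single ``active'' edge of $H$, initially $v_1v_2$. At each iteration, with active edge $e=uv$ and apexes $w_1,\dots,w_r$ of $e$ in $G$ sorted so that $\mu(e,w_i)$ is decreasing, I greedily add $w_1,\dots,w_s$ to $H$ until the residual mass $\sum_{i>s}\mu(e,w_i)$ drops to at most $N(e)/2$; among the added apexes I designate $w^*$ as the one whose heavier side carries the largest $N$-value, and I let the heavier of $uw^*,vw^*$ become the next active edge; every other side edge of every added apex is inserted into $H$ as a trivial leaf edge. Since exactly one newly created edge survives as active each iteration, the previous active edge $e$ ends up with precisely one nontrivial side edge (the new active edge) and all its other side edges are trivial, so the resulting $H$ is a linear $2$-tree rooted at $v_1v_2$.

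The $z$-bound follows quickly: every class $1$-$2$ $H$-component is either the component $G_e$ at some class $1$-$2$ active edge---for which the greedy threshold gives $|G_e|-1\le N(e)/2\le n/2$---or a component attached to a trivial class $1$-$2$ leaf edge of $H$, in which case $G_e$ lies entirely inside the subtree rooted at that leaf edge, whose $N$-value has been halved by the heavier-side rule at least once along the main chain and is hence at most $n/2$. The main obstacle will be the joint $(x,y)$-bound. At $v_1v_2$, the chosen apex $w^*$ creates two sides of classes $1$-$3$ and $2$-$3$; the heavier-side rule sends the lighter of the two sides entirely into a single dead-leaf $H$-component of size bounded by half of the residual mass carried through $w^*$, which yields case (i) or case (ii) according to whether the discarded side is of class $2$-$3$ or $1$-$3$. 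Extending this observation to \emph{every} class $1$-$3$ and class $2$-$3$ $H$-component in $G$ requires a careful inductive argument along the main chain that tracks how fresh $1$-$3$ and $2$-$3$ components arise both at non-chosen added apexes and at discarded sides of chosen apexes deeper in the chain, and shows that each respects the asymmetric budget $x\le n/2,\;y\le(n-x)/2$ (or its symmetric counterpart). I expect this amortized bookkeeping, together with verifying that the greedy threshold is always achievable without violating the asymmetric bound, to be the principal technical hurdle.
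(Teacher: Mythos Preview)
Your construction is more elaborate than what is actually needed, and the part you flag as ``the principal technical hurdle''---the joint bound on $x$ and $y$---is exactly the content of the lemma; leaving it as an expected inductive bookkeeping argument is a genuine gap, not a detail. In particular, your one-step observation at $v_1v_2$ only controls the single discarded side of the chosen apex $w^*$; it says nothing about the class $1$-$3$ and $2$-$3$ components hanging off the \emph{other} added apexes $w_i\neq w^*$, nor about such components created arbitrarily deep along the chain, and your sketch gives no mechanism that ties all of those simultaneously to the single asymmetric budget $y\le (n-x)/2$.

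The paper avoids all of this with two simplifications. First, at each designated edge $e_i$ it adds \emph{all} apexes and \emph{all} side edges of $e_i$ to $H$ (no threshold, no partial addition), so designated edges never root a nontrivial $H$-component and every $H$-component is rooted at some non-designated side edge of some $e_i$. Second, $e_{i+1}$ is simply the side edge of $e_i$ whose $H_i$-component is largest. With this rule the $(x,y)$ bound is a two-line disjointness argument rather than an induction: take a largest class $1$-$3$ component $X$ and a largest class $2$-$3$ component $Y$, say rooted at side edges of $e_i$ and $e_j$ with $i\le j$; then the component $K_{j+1}$ chosen at step $j$ has at least as many internal vertices as $Y$, and $X$, $Y$, $K_{j+1}$ (plus at least two further apex vertices) have pairwise disjoint internal vertices, forcing $(x-1)+2(y-1)+2\le n-1$, i.e.\ $y\le (n-x)/2$. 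The same disjointness argument with $K_{i+1}$ gives $x\le n/2$ (and $z\le n/2$). Your threshold and heavier-side rule are not needed and, as your own assessment suggests, would only make this step harder.
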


\begin{proof}
	We give an algorithm to find the required subgraph $H$ of $G$. We are going to define a sequence $H_0,H_1,\dots$ of subgraphs of $G$, where $H_i$ is a subgraph of $H_{i+1}$, for $i=0,1,\dots$; the desired graph $H$ is the last graph in this sequence. Together with the sequence $H_0,H_1,\dots$, we are also going to define a sequence of {\em designated edges} $e_0,e_1,\dots$, where $e_{i+1}$ is a side edge of $e_i$, for $i=0,1,\dots$. This is done so to maintain the following invariants. First, for $i=0,1,\dots$, the designated edge $e_i$, the apexes of $e_i$, and the side edges of $e_i$ all belong to $H_i$. Second, no apex of a side edge of $e_i$ belongs to $H_i$. The sequences $H_0,H_1,\dots$ and $e_0,e_1,\dots$ are initialized by defining $e_0=v_1v_2$ and by defining $H_0$ as the graph consisting of $e_0$, as well as of the apexes and the side edges of $e_0$. Note that the invariants are satisfied by the definition of $H_0$ and $e_0$.
	
	Now assume that, for some integer $i\geq 0$, sequences $H_0,H_1,\dots,H_i$ and $e_0,e_1,\dots,e_i$ have been defined, so that the invariants are satisfied. Two cases are possible. If the designated edge $e_i$ has no side edges, then $H$ coincides with $H_i$. Otherwise, a side edge of $e_i$ is chosen as the new designated edge $e_{i+1}$ and $H_{i+1}$ is obtained by adding all the apexes and side edges of $e_{i+1}$ to $H_i$. The rule to determine which side edge of $e_i$ is $e_{i+1}$ is the following. Let $K_{i+1}$ be the $H_i$-component of $G$ whose root is a side edge of $e_i$ and whose number of vertices is maximum (ties are broken arbitrarily); then $e_{i+1}$ is the root of $K_{i+1}$. 
	
	It remains to prove that $H$ satisfies the requirements of the lemma. Note that $H$ has $v_1v_2$ as its root, by construction. Further, the invariant that no apex of a side edge of $e_i$ belongs to $H_i$ ensures that $H$ is a linear $2$-tree. 
	
	In order to complete the proof we exploit the following properties.
	
	\begin{enumerate}[(P1)]
		\item No designated edge $e_i$ is the root of an $H$-component of $G$.
		\item The root of any $H$-component of $G$ is the side edge of a \mbox{designated edge $e_i$.}
	\end{enumerate}
	
	Both properties are trivially satisfied by $H_0$ and are easily shown to be satisfied by $H_{i+1}$ given that  they are satisfied by $H_i$.
	
	\begin{figure}[htb]\tabcolsep=4pt
		\centering
		\begin{tabular}{c c c c}
			\includegraphics[scale=1.15]{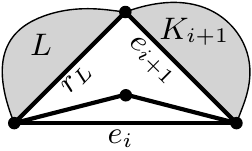} \hspace{1mm} &
			\includegraphics[scale=1.15]{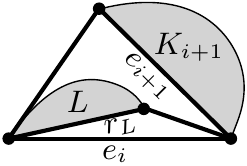} \hspace{1mm} &
			\includegraphics[scale=1.15]{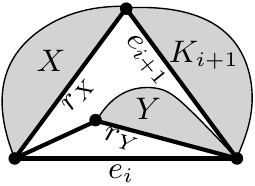} \hspace{1mm} &
			\includegraphics[scale=1.15]{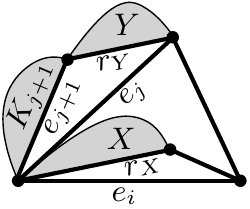}\\
			(a) \hspace{1mm} & (b)\hspace{1mm} & (c)\hspace{1mm} & (d)
		\end{tabular}
		\caption{Illustrations for the proof of Lemma~\ref{le:h-component}. (a)-(b) Every $H$-component of $G$ has at most $\frac{n}{2}+1$ vertices; in (a) $r_L$ and $e_{i+1}$ share a vertex, while in (b) they do not. (c)-(d) If $r_X$ and $r_Y$ are side edges for $e_i$ and $e_j$ with $i\leq j$, then $x\leq \frac{n}{2}$ and $y\leq \frac{n-x}{2}$; in (c) we have $i=j$, while in (d) we have $i<j$.}
		\label{fig:decomposition}
	\end{figure}
	
	We prove that every $H$-component of $G$ has at most $\frac{n}{2}+1$ vertices; refer to Figures~\ref{fig:decomposition}(a) and~\ref{fig:decomposition}(b). Suppose, for a contradiction, that there exists an $H$-component $L$ of $G$ that has more than $\frac{n}{2}+1$ vertices, hence more than $\frac{n}{2}-1$ internal vertices. By (P2) the root $r_L$ of $L$ is a side edge of a designated edge $e_i$. By (P1) we have $e_{i+1}\neq r_L$. By construction, $K_{i+1}$ has a number of (internal) vertices which is larger than or equal to the number of (internal) vertices of $L$. Since $L$ and $K_{i+1}$ do not share internal vertices, they have a total of more than $n-2$ internal vertices; these are also internal vertices of $G$. Further, by (P1) we have $e_0\neq r_L$, hence at least one end-vertex of $r_L$ is an internal vertex of $G$. It follows that $G$ has more than $n-1$ internal vertices, a contradiction. Hence, every $H$-component of $G$ has at most $\frac{n}{2}+1$ vertices; this implies that $x,y,z\leq \frac{n}{2}$.
	
	Next, consider any $H$-component $X$ of class $1$-$3$ that has $x+1$ vertices, hence $x-1$ internal vertices. Further, consider any $H$-component $Y$ of class $2$-$3$ that has $y+1$ vertices, hence $y-1$ internal vertices. By (P2) the roots $r_X$ and $r_Y$ of $X$ and $Y$ are side edges of two (not necessarily distinct) designated edges $e_i$ and $e_j$, respectively. Assume that $i\leq j$, as the case in which $i>j$ can be discussed analogously. Refer to Figures~\ref{fig:decomposition}(c) and~\ref{fig:decomposition}(d).
	
	Recall that $x\leq \frac{n}{2}$ has been proved already. Suppose, for a contradiction, that $y>\frac{n-x}{2}$. By (P1) we have $e_{j+1}\neq r_X,r_Y$. By construction, $K_{j+1}$ has a number of (internal) vertices which is larger than or equal to the number of (internal) vertices of $Y$. Since $X$, $Y$, and $K_{j+1}$ do not share internal vertices, they have a total of at least $(x-1)+(y-1)+(y-1)>(x-1)+n-x-2=n-3$ internal vertices; these are also internal vertices of $G$. Further, $G$ contains at least two more internal vertices. Namely:
	\begin{itemize} 
		\item if $i<j$, then the apexes of $e_i$ and $e_j$ incident to $r_X$ and $r_Y$, respectively, are internal vertices of $G$; and 
		\item if $i=j$, then there are at least two apexes of $e_i$ which are internal vertices of $G$, given that $e_i$ has at least three side edges, namely $r_X$, $r_Y$, and $e_{i+1}$, and given that any apex of $e_i$ is incident to two side edges of $e_i$. 
	\end{itemize} 
	It follows that $G$ has more than $n-1$ internal vertices, a contradiction. This proves that $y\leq \frac{n-x}{2}$ and concludes the proof of the lemma.
\end{proof}

We now give an algorithm to construct a planar straight-line drawing $\Gamma$ of~$G$. Let $f(n)=n^{\log_2 \phi}$, where $\phi = \frac{1+\sqrt 5}{2}$. The algorithm receives as input a triangle $a_1a_2a_3$, whose hypotenuse $\overline{a_1a_2}$ is such that $||\overline{a_1a_2}||\geq f(n)$, and constructs a planar straight-line drawing $\Gamma$ of $G$ satisfying the following properties: \begin{enumerate}
	\item[(T0)] the length of every edge is at least $1$ and at most $||\overline{a_1a_2}||$;
	\item[(T1)] for $i=1,2$, the vertex $v_i$ lies at $a_i$; and
	\item[(T2)] every internal vertex of $G$ lies inside $a_1a_2a_3$.
\end{enumerate} 

If $n=1$, that is, $G$ coincides with the edge $v_1v_2$, then $\Gamma$ is the straight-line segment $\overline{a_1a_2}$. Then property (T1) is trivially satisfied, property (T2) is vacuous, and property (T0) is satisfied since $||\overline{a_1a_2}||\geq n^{\log_2 \phi} = 1$. 

\begin{figure}[htb]\tabcolsep=4pt
	\centering
	\includegraphics[scale=1.2]{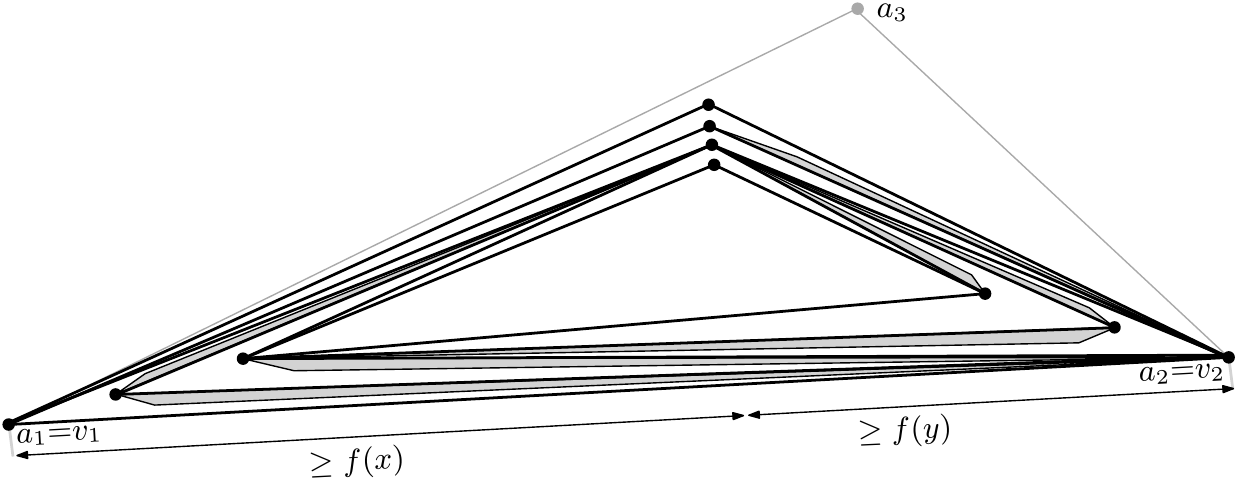}
	\caption{The planar straight-line drawing $\Gamma$ of the graph $G$ in Figure~\ref{fig:2-trees} constructed by the algorithm in the proof of Theorem~\ref{th:2-trees}.}
	\label{fig:2-trees-algorithm}
\end{figure}

Assume next that $n>1$ and refer to Figure~\ref{fig:2-trees-algorithm}. Let $H$ be a subgraph of $G$ satisfying the properties of Lemma~\ref{le:h-component}; in particular (i) $x\leq \frac{n}{2}$ and $y\leq \frac{n-x}{2}$, or (ii) $y\leq \frac{n}{2}$ and $x\leq \frac{n-y}{2}$, where $x$ and $y$ are the maximum number of vertices of an $H$-component of $G$ of class $1$-$3$ and $2$-$3$, respectively, minus one. We construct a planar straight-line drawing $\Gamma_H$ of $H$ by applying the algorithm L2T drawer with input the triangle $a_1a_2a_3$ and the real values $\ell_{1\textrm{-}3}=f(x)$, $\ell_{2\textrm{-}3}=f(y)$, and $\ell_{1\textrm{-}2}=f(z)$. Note that $\ell_{1\textrm{-}2}=f(z)<f(n)\leq ||\overline{a_1a_2}||$, given that $z<n$. Further, as proved by Chan~\cite{DBLP:journals/algorithmica/Chan02}, we have $f(n)=n^{\log_2 \phi}\geq x^{\log_2 \phi}+y^{\log_2 \phi}=f(x)+f(y)$, given that $x\leq \frac{n}{2}$ and $y\leq \frac{n-x}{2}$, or $y\leq \frac{n}{2}$ and $x\leq \frac{n-y}{2}$; this implies that $||\overline{a_1a_2}||\geq \ell_{1\textrm{-}3} + \ell_{2\textrm{-}3}$. 

Let $G_1,\dots,G_k$ be the $H$-components of $G$; for $i=1,\dots,k$, let $u_iv_i$ be the root of $G_i$. Note that $u_iv_i$ is an edge of $H$, hence it is represented by a straight-line segment $\overline{u_iv_i}$ in $\Gamma_H$. For $i=1,\dots,k$, let $w_i$ be a point such that the triangle $\Delta_i=u_iv_iw_i$ lies inside $a_1a_2a_3$, does not intersect $\Gamma_H$ other than at $\overline{u_iv_i}$, and does not intersect any distinct triangle $\Delta_j$, except at common vertices. Since $\Gamma_H$ is planar, choosing $w_i$ sufficiently close to $\overline{u_iv_i}$ suffices to accomplish these objectives. For $i=1,\dots,k$, we recursively draw $G_i$ so that $u_i$ and $v_i$ lie at the same points as in $\Gamma_H$ and so that every internal vertex of $G_i$ lies inside $\Delta_i$. This concludes the construction of a planar straight-line drawing $\Gamma$~of~$G$.

We prove that $\Gamma$ satisfies properties (T0)--(T2). Property~(T1) is satisfied since $\Gamma_H$ satisfies property~(L1); further, property~(T2) is satisfied since $\Gamma_H$ satisfies property~(L2), since the internal vertices of $G_i$ lie inside the triangle $\Delta_i$, and since $\Delta_i$ lies inside $a_1a_2a_3$, by construction. We now deal with property~(T0). The length of every edge of $H$ in $\Gamma$ is at least $\min \{f(x),f(y),f(z)\}$ by property~(L3) of $\Gamma_H$; further, $f(x)=x^{\log_2 \phi}\geq 1$, $f(y)=y^{\log_2 \phi}\geq 1$, and $f(z)=z^{\log_2 \phi}\geq 1$, given that $x,y,z\geq 1$. The length of every edge of $H$ in $\Gamma$ is at most $||\overline{a_1a_2}||$, given that every vertex of $H$ lies inside or on the boundary of $a_1a_2a_3$, by properties~(L1) and~(L2) of $\Gamma_H$, and given that $\overline{a_1a_2}$ is the hypotenuse of $a_1a_2a_3$. The length of every edge of $G$ not in $H$ is at least $1$ and at most $||\overline{a_1a_2}||$ by induction and since every triangle $\Delta_i$ lies inside $a_1a_2a_3$. This completes the induction. 

Applying the described algorithm with a triangle $a_1a_2a_3$ whose hypotenuse has length $||\overline{a_1a_2}||=f(n)$ results in a planar straight-line drawing of $G$ with edge-length ratio at most $f(n)=n^{\log_2 \phi}$. This concludes \mbox{the proof of Theorem~\ref{th:2-trees}.} 

We remark that $f(n)=n^{\log_2 \phi}$ is the smallest possible function when using the decomposition of Lemma~\ref{le:h-component}, as an example in which $x=\frac{n}{2}$ and $y=\frac{n}{4}$ shows.

We also remark that a graph has treewidth at most $2$ if and only if it is a subgraph of a $2$-tree; hence, Lemma~\ref{le:subgraphs} and Theorem~\ref{th:2-trees} imply the following.

\begin{corollary} Every $n$-vertex graph with treewidth at most $2$ has planar edge-length ratio in $O(n^{\log_2 \phi})\subseteq \areaSP$, where $\phi=\frac{1+\sqrt 5}{2}$ is the golden ratio. 
\end{corollary}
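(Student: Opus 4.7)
The plan is to combine the structural characterization of treewidth-$2$ graphs with the two tools already established earlier in the paper. The starting point is the classical fact, explicitly foreshadowed in the remark immediately preceding the corollary, that a graph has treewidth at most $2$ if and only if it is a subgraph (on the same vertex set) of some $2$-tree; indeed, any graph of treewidth $2$ can be completed to an edge-maximal such graph by adding edges, and a graph on $n$ vertices that is edge-maximal subject to having treewidth at most $2$ is precisely an $n$-vertex $2$-tree.

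Given an $n$-vertex graph $G$ with treewidth at most $2$, I would first invoke this characterization to view $G$ as a spanning subgraph of some $n$-vertex $2$-tree $G'$. Theorem~\ref{th:2-trees} then produces a planar straight-line drawing of $G'$ whose edge-length ratio lies in $O(n^{\log_2 \phi}) \subseteq \areaSP$, hence $\rho(G') \in O(n^{\log_2 \phi})$. Applying Lemma~\ref{le:subgraphs} to the pair $G \subseteq G'$ gives $\rho(G) \leq \rho(G')$, so $\rho(G) \in O(n^{\log_2 \phi}) \subseteq \areaSP$, which is exactly the statement of the corollary.

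There is essentially no obstacle: the argument is a one-line composition of Theorem~\ref{th:2-trees} with Lemma~\ref{le:subgraphs}, and the only ingredient imported from outside the paper is the standard characterization of treewidth-$2$ graphs as partial $2$-trees. The only small sanity check worth mentioning is that this external characterization furnishes a $2$-tree on the \emph{same} vertex set as $G$, so that the parameter $n$ appearing in the bound inherited from Theorem~\ref{th:2-trees} matches the parameter $n$ in the statement of the corollary; this is standard and requires no additional work.
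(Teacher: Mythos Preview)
Your proposal is correct and matches the paper's approach exactly: the paper derives the corollary as an immediate consequence of Lemma~\ref{le:subgraphs} and Theorem~\ref{th:2-trees}, together with the standard fact that treewidth-$2$ graphs are precisely the subgraphs of $2$-trees. Your added sanity check about the $2$-tree having the same vertex set is a nice touch but, as you note, routine.
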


The bound on the treewidth in the above result is the best possible, as the proof of Theorem~\ref{th:lower-bound} shows that an $n$-vertex planar graph with treewidth $3$ might have planar edge-length ratio in $\Omega(n)$. Observe that graphs with treewidth $1$, i.e., trees, have  planar edge-length ratio equal to $1$.

\subsection{Bipartite Planar Graphs} \label{se:bipartite}

In this section we deal with bipartite planar graphs, which we prove to have planar edge-length ratio arbitrarily close to $1$.

\begin{theorem} \label{th:bipartite}
	For every $\epsilon>0$, every bipartite planar graph has planar edge-length ratio smaller than $1+\epsilon$.
\end{theorem}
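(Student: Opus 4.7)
The plan is to reduce the problem to drawing bipartite plane graphs with all internal faces of length four, i.e., plane \emph{quadrangulations}, and then to exploit the flexibility in drawing such graphs to make all edges have length arbitrarily close to a common value.

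First, I would observe that, without loss of generality, we may assume $G$ is a quadrangulation. Namely, if $G$ has a face $f$ of length $2k \ge 6$, we can add a chord to $f$ connecting two vertices in the same part of the bipartition at cyclic distance $3$ along $f$; this chord splits $f$ into two faces of even length and preserves both planarity and bipartiteness. Iterating (and handling the outer face by choosing it appropriately as a $4$-cycle) produces a bipartite quadrangulation $Q$ containing $G$ as a spanning subgraph. By Lemma~\ref{le:subgraphs}, it suffices to prove the bound for $Q$.

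For the drawing of $Q$, my plan is to use an incremental, canonical-ordering-based construction. Specifically, process the internal vertices of $Q$ in a canonical order $v_5, v_6, \ldots, v_n$ (in the spirit of the canonical orderings for quadrangulations due to Kant and He), starting from the outer $4$-cycle $v_1v_2v_3v_4$ drawn as a near-square with side length $1$. At each step, a new vertex $v_k$ has to be inserted so that its already-drawn neighbors form a contiguous path on the current outer boundary, and it must be connected to each of them by a straight-line segment. The key observation is that, by choosing the drawing of the outer $4$-cycle to be ``flat enough'' and by placing each $v_k$ very close to the midpoint of its neighborhood path (just slightly above it), each new edge can be made to have length in the interval $(1, 1+\epsilon)$: as the placement of $v_k$ approaches the chord between the two extreme neighbors, the lengths of all edges incident to $v_k$ approach the horizontal extents of those chords, which can be equalized up to any tolerance by an appropriate choice of $x$-coordinates for the $v_i$'s.

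The main obstacle is twofold. The first difficulty is showing that a canonical order with the required properties (neighbors of $v_k$ form a contiguous path on the outer boundary, and the outer face is always bounded by a quadrangulation-respecting cycle) exists for every bipartite quadrangulation; this is standard but requires care when $Q$ is not $3$-connected, in which case an SPQR-like decomposition is needed. The second and more delicate issue is enforcing the edge-length condition globally: although each individual vertex $v_k$ can be placed so that its incident edges are nearly of unit length, cumulative distortion over many insertions could push some edges outside the target interval $(1,1+\epsilon)$. To control this, I would prescribe once and for all a target $x$-coordinate for each vertex (so that consecutive vertices on every partial outer boundary have $x$-distance exactly $1$) and a very small target $y$-coordinate for each internal vertex, and then verify inductively that, for sufficiently small vertical scale, every edge has length between $1$ and $1+\epsilon$ while the drawing remains planar.
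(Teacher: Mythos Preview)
Your reduction to quadrangulations is correct and matches the paper (modulo a slip: a chord at cyclic distance~$3$ joins vertices in \emph{different} parts of the bipartition, not the same part). The difficulty is in the drawing step, and here your plan has a real gap.

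The constraint you impose --- that consecutive vertices on every partial outer boundary have $x$-distance exactly~$1$ --- forces every insertion to cover a boundary subpath of length at most~$3$. Indeed, if the new vertex $v_k$ replaces a subpath $w_i,w_{i+1},\dots,w_j$ by $w_i,v_k,w_j$, then the old constraint gives $x(w_j)-x(w_i)=j-i$ while the new one gives $x(w_j)-x(w_i)=2$, so $j-i=2$. Equivalently, every $v_k$ must have exactly two already-drawn neighbours, and they must be at boundary distance~$2$. A Kant--He style canonical ordering does \emph{not} guarantee this: in a quadrangulation a newly added vertex may see many boundary vertices (think of the ``double pseudo-wheel'': a $2m$-cycle with one vertex joined to all even-indexed cycle vertices and another to all odd-indexed ones; whichever of the two hubs is inserted last has $m$ neighbours on the boundary, spread around the whole cycle). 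Once a vertex has three or more neighbours that already sit at pairwise distance~$\geq 1$, there is simply no point at distance~$\approx 1$ from all of them, and your flat-$y$ placement breaks down. Also note that in a bipartite graph the new vertex can never be adjacent to \emph{each} vertex on a boundary path of length~$\geq 2$, so that part of the description is off.

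The paper sidesteps this entirely. Instead of a canonical ordering, it invokes the generation theorem of Brinkmann et al.: every maximal bipartite plane graph is obtained from a $4$-cycle by repeatedly applying either $P_0$ (insert a degree-$2$ vertex $x$ inside a face $uvwz$, joined to $u$ and $w$) or $P_1$ (split a path $uvw$ into a $4$-cycle $uvwx$). In both operations the new vertex has degree exactly~$2$, and there is a natural ``pivot'' vertex $v$ already adjacent to both of $x$'s neighbours. Placing $x$ inside a tiny disk around $v$ then makes $\ell(ux)$ and $\ell(wx)$ arbitrarily close to $\ell(uv)$ and $\ell(wv)$, so the interval $(1,1+\epsilon)$ is preserved by a straightforward induction with a triangle-inequality bound. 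The point is that the degree-$2$ property you implicitly need is supplied by the generation theorem, not by any boundary-based ordering.
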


\begin{proof}
	First, it suffices to prove the statement for \emph{maximal} bipartite planar graphs. This follows by Lemma~\ref{le:subgraphs} and by the fact that any nonmaximal bipartite planar graph can be augmented to be maximal by adding edges to it. 
	
	\begin{figure}[htb]\tabcolsep=4pt
		\centering
		\begin{tabular}{c c c c}
			\includegraphics[scale=1.25]{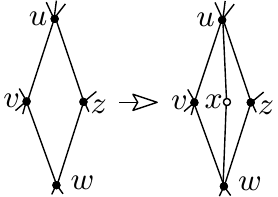} \hspace{1mm} &
			\includegraphics[scale=1.25]{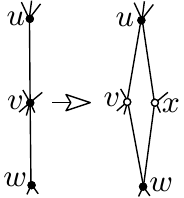} \hspace{1mm} &
			\includegraphics[scale=1.25]{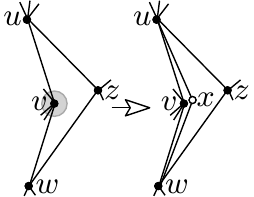} \hspace{1mm} &
			\includegraphics[scale=1.25]{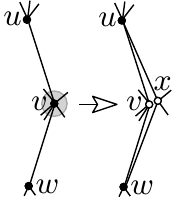} \\
			(a) \hspace{1mm} & (b)\hspace{1mm} & (c)\hspace{1mm} & (d)\\
		\end{tabular}
		\caption{(a) The operation $P_0$. (b) The operation $P_1$. (c) and (d) show how to transform a drawing $\Gamma'$ of $G'$ into a drawing $\Gamma$ of $G$ by applying the operations $P_0$ or $P_1$, respectively. The gray disk~is~$D$.}
		\label{fig:bipartite}
	\end{figure}
	
	Second, Brinkmann et al.~\cite{DBLP:journals/dm/BrinkmannGGMTW05} proved that every $n$-vertex maximal bipartite plane graph $G$ is either a $4$-cycle embedded in the plane, or can be obtained from an $(n-1)$-vertex maximal bipartite plane graph $G'$ by applying either the operation $P_0$ shown in Figure~\ref{fig:bipartite}(a), in which a path $uxw$ is inserted in a face $f$ of $G'$ delimited by a $4$-cycle $uvwz$, or the operation $P_1$ shown in Figure~\ref{fig:bipartite}(b), in which a path $uvw$ of $G'$ is transformed into a $4$-cycle $uvwx$. 
	
	We now prove that, for every $\epsilon>0$, any $n$-vertex maximal bipartite plane graph $G$ admits a planar straight-line drawing $\Gamma$ in which every edge has length larger than $1$ and smaller than $1+\epsilon$. The proof is by induction on $n$. If $n=4$, then $G$ is a $4$-cycle embedded in the plane, and the desired drawing $\Gamma$ of $G$ is any square with side length equal to $1+\delta$, with $0<\delta<\epsilon$.
	
	If $n>4$, then let $G'$ be an $(n-1)$-vertex maximal bipartite plane graph such that $G$ can be obtained from $G'$ by applying either the operation $P_0$ or the operation $P_1$. Fix any $\delta$ such that $0<\delta<\epsilon$; inductively construct a planar straight-line drawing $\Gamma'$ of $G'$ in which every edge has length larger than $1$ and smaller than $1+\delta$. Let $\ell_1=\min_e \{\ell_{\Gamma'}(e)-1\}$, $\ell_2=\min_e \{1+\epsilon -\ell_{\Gamma'}(e)\}$, and $\ell=\min \{\ell_1,\ell_2\}$, where the first two minima are over all the edges $e$ of $G'$. Let $D$ be a disk with radius $\ell$ centered at $v$ in $\Gamma'$. 
	
	
	Both the operations $P_0$ and $P_1$ correspond to the \emph{expansion} of a vertex $v$ into an edge $vx$, followed by the removal of such an edge. Hence, it follows from standard continuity arguments that a planar straight-line drawing $\Gamma$ of $G$ can be obtained from $\Gamma'$ by suitably replacing the vertex $v$ with the edge $vx$, so that the position of $v$ in $\Gamma$ is the same as in $\Gamma'$, and so that $x$ is arbitrarily close to $v$ in $\Gamma$; see, e.g., the proof of F\'ary's theorem~\cite{fary}.
	
	Thus, both if the operation $P_0$ or if the operation $P_1$ transform $G'$ into $G$, we can obtain a planar straight-line drawing $\Gamma$ of $G$ in which every vertex other than $x$ is at the same position as in $\Gamma'$, and in which $x$ is inside the disk $D$; see Figures~\ref{fig:bipartite}(c) and~\ref{fig:bipartite}(d). Note that, for every edge $e$ of $G$ that is not incident to $x$, we have $1<\ell_{\Gamma}(e)<1+\epsilon$, given that $1<\ell_{\Gamma'}(e)<1+\delta$. Further, consider any edge $e=tx$ of $G$ and note that $e'=tv$ is an edge of $G'$. By the triangular inequality we have  $||\overline{tx}||<||\overline{tv}||+||\overline{vx}||<\ell_{\Gamma'}(e')+\ell\leq \ell_{\Gamma'}(e')+(1+\epsilon-\ell_{\Gamma'}(e'))=1+\epsilon$, and $||\overline{tx}||>||\overline{tv}||-||\overline{vx}||>\ell_{\Gamma'}(e')-\ell\geq \ell_{\Gamma'}(e')-(\ell_{\Gamma'}(e')-1)=1$. This concludes the induction and hence the proof of the theorem.
\end{proof}

Note that the bound in Theorem~\ref{th:bipartite} is the best possible, as there exist bipartite planar graphs (for example any complete bipartite graph $K_{2,m}$ with $m\geq 3$) that admit no planar straight-line drawing with edge-length ratio equal to $1$.

\section{Conclusions and Open Problems}\label{se:conclusions}

In this paper we have proved that there exist $n$-vertex planar graphs whose planar edge-length ratio is in $\Omega(n)$; that is, in any planar straight-line drawing of one of such graphs, the ratio between the length of the longest edge and the length of the shortest edge is in $\Omega(n)$. Further, we have proved upper bounds for the planar edge-length ratio of several graph classes, most notably an $\areaSP$ upper bound for the planar edge-length ratio of $2$-trees. Several problems remain open; we mention some of them. 

\begin{itemize}
	\item First, what is the asymptotic behavior of the planar edge-length ratio of $2$-trees? In particular, we wonder whether our geometric construction can lead to a better upper bound if coupled with a decomposition technique better than the one in Lemma~\ref{le:h-component}. 
	
	\item Second, is the planar edge-length ratio of cubic planar graphs sub-linear? The proof of Theorem~\ref{th:lower-bound} shows that this question has a negative answer when extended to all bounded-degree planar graphs. 
	
	\item Third, is the planar edge-length ratio of $k$-outerplanar graphs bounded by some function of $k$? The results from~\cite{DBLP:journals/tcs/LazardLL19} show that this is indeed the case for $k=1$. 
\end{itemize}

Our final remark is about nonplanar drawings. A drawing of a graph is \emph{proper} if no two vertices overlap with each other and no edge overlaps with a nonincident vertex. The \emph{edge-length ratio of a graph} $G$ is the minimum edge-length ratio of any (possibly nonplanar) proper straight-line drawing of $G$. The next theorem shows that the edge-length ratio of a graph is tied, up to multiplicative constants and lower order terms, to the square of its chromatic number.

\begin{theorem} \label{th:non-planar}
	Let $G$ be a graph. If $G$ has chromatic number $k$, then it has edge length ratio at most $c_1\sqrt k + O(1)$, for some constant $c_1$. Further, if $G$ has edge length ratio $h$, then it has chromatic number at most $c_2\cdot h^2 + O(h)$, for some constant $c_2$.
\end{theorem}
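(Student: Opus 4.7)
The plan is to prove the two inequalities separately via two geometric constructions: a clustering construction for the upper bound on the edge-length ratio in terms of the chromatic number, and a modular tiling construction for the reverse inequality.

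For the first bound, I would fix a proper $k$-coloring $V(G) = V_1 \cup \cdots \cup V_k$ and build a drawing in which each color class is confined to a tiny cluster. The plan is to choose $k$ points $p_1, \ldots, p_k$ among the points of the $\lceil \sqrt k \rceil \times \lceil \sqrt k \rceil$ unit integer grid and then apply an arbitrarily small perturbation to bring them into \emph{general position}, so that no three $p_i$'s are collinear. Let $\delta > 0$ be the minimum distance from any $p_i$ to the line through any two other $p_j, p_l$. For a sufficiently small $\epsilon < \delta/3$, place the vertices of $V_i$ at distinct points inside the disk of radius $\epsilon$ around $p_i$, chosen in general position so that no vertex lies on an open segment between any other pair of vertices. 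Since each $V_i$ is independent, every edge goes between two distinct clusters; its length is then bounded below by roughly $1 - 2\epsilon$ (the minimum grid distance, up to perturbation) and above by roughly $\sqrt{2}\,(\sqrt k - 1) + 2\epsilon$ (the grid diameter), yielding an edge-length ratio of $\sqrt{2}\sqrt k + O(1)$ as $\epsilon$ tends to $0$. Hence $c_1 = \sqrt 2$ suffices.

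For the second bound, I would color the vertices purely by their geometric location in a proper drawing. After scaling so that the shortest edge has length $1$ (and hence the longest has length $h$), fix $s$ slightly smaller than $1/\sqrt 2$ and partition the plane into axis-aligned squares of side $s$. Let $m$ be the smallest positive integer with $(m-1)s > h$. Color the cell with indices $(i,j)$ by the pair $(i \bmod m,\, j \bmod m)$, giving $m^2$ color classes, and assign each vertex the color of the cell containing it. Two vertices sharing a color either lie in the same cell---in which case they are at distance at most $s\sqrt{2} < 1$, hence non-adjacent---or lie in different cells whose indices differ by at least $m$ in one coordinate; in the latter case, the difference of that coordinate is at least $(m-1)s > h$, so the two vertices are again non-adjacent. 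With $s$ approaching $1/\sqrt 2$ we obtain $m = h\sqrt 2 + O(1)$ and therefore $m^2 = 2h^2 + O(h)$, so $c_2 = 2$ suffices.

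The main obstacle is making the first construction genuinely proper: after placing the cluster centers in general position, one must verify that for small $\epsilon$ no edge between clusters $p_i$ and $p_j$ passes through a vertex in a third cluster $p_l$ (which follows from $\epsilon < \delta/3$ and the lower bound $\delta$ on the distance of $p_l$ from line $p_ip_j$), nor through any other vertex in clusters $p_i$ or $p_j$ (which follows from a further generic choice of vertex positions inside each cluster). The second construction is entirely routine once the modular tiling idea is in place, so the overall technical cost of the theorem lies almost entirely in the perturbation bookkeeping of the first construction.
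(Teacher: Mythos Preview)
Your proposal is correct and follows essentially the same two constructions as the paper: clustering color classes near the points of a $\lceil\sqrt k\rceil\times\lceil\sqrt k\rceil$ grid for the first bound, and a modular tiling of the plane by small squares (of diameter just under~$1$) for the second, with the same constants $c_1=\sqrt 2$ and $c_2=2$. The only cosmetic difference is that the paper avoids your perturbation-and-$\delta/3$ bookkeeping in the first part by simply choosing all vertex positions inside the disks so that no three of them are collinear, which immediately gives properness.
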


\begin{proof}
	First, let $G$ be a graph with chromatic number $k$; refer to Figure~\ref{fig:nonplanar}(a). Let $C_1,\dots,C_k$ be the color classes in a proper $k$-coloring of $G$. Consider a $\lceil \sqrt k \rceil \times \lceil \sqrt k \rceil$ grid $\mathcal H$ and let $p_1,\dots,p_k$ be any $k$ distinct points of $\mathcal H$ . For $i=1,\dots,k$, consider a disk $D_i$ centered at $p_i$ with radius $\epsilon/3$, for some small constant $\epsilon>0$; pick a set $S_i$ of $|C_i|$ distinct points in the interior of $D_i$, in such a way that no three points in $S_1\cup \dots \cup S_k$ are collinear. For $i=1,\dots,k$, injectively map the vertices in $C_i$ to the points in $S_i$. The resulting straight-line drawing $\Gamma$ of $G$ is proper, as no three points in $S_1\cup \dots \cup S_k$ are collinear. Further, the length of any edge of $G$ in $\Gamma$ is larger than $1-2\epsilon/3$; this comes from the triangular inequality and from the fact that the distance between any two centers of distinct disks $D_i$ and $D_j$ is larger than or equal to~$1$. Finally, the length of any edge of $G$ in $\Gamma$ is smaller than $\sqrt{2k}+2\epsilon/3$; this comes again from the triangular inequality and from the fact that the distance between any two centers of distinct disks $D_i$ and $D_j$ is smaller than or equal to~$\sqrt{2k}$, given that the side length of $\mathcal H$ is $\lceil \sqrt k \rceil -1 < \sqrt k$. It follows that the edge-length ratio of $\Gamma$ is smaller than or equal to $c_1\sqrt k + O(1)$, with $c_1=\sqrt 2/(1-2\epsilon/3)$. 	
	
	\begin{figure}[tb]\tabcolsep=4pt
		\centering
		\begin{tabular}{c c}
			\includegraphics[scale=1.1]{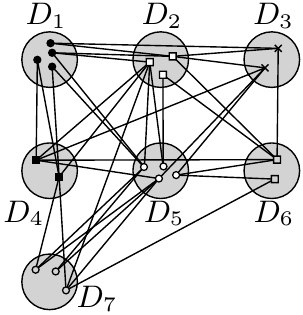} \hspace{6mm} &
			\includegraphics[scale=1.1]{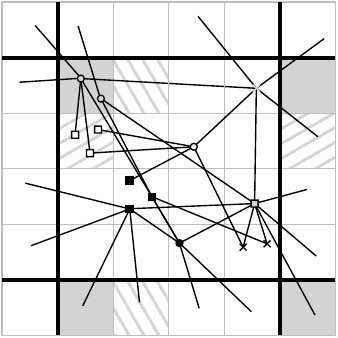} \\
			(a) \hspace{6mm} & (b) \\
		\end{tabular}
		\caption{(a) Illustration for the proof that a graph with chromatic number $k$ has edge length ratio at most $c_1\sqrt k + O(1)$, for some constant $c_1$; in this example $k=7$. (b) Illustration for the proof that a graph with edge-length ratio $h$ has chromatic number at most $c_2\cdot h^2 + O(h)$, for some constant $c_2$. Only the vertices inside a big square are shown, together with their incident edges. Thick and thin lines delimit big and small squares, respectively.}
		\label{fig:nonplanar}
	\end{figure}

	Second, let $G$ be a graph that admits a proper\footnote{In this proof we actually only use the fact that the length of every edge in $\Gamma$ is greater than $0$.} drawing $\Gamma$ with edge-length ratio $h$; refer to Figure~\ref{fig:nonplanar}(b). Assume, w.l.o.g.\ up to a uniform scaling, that the minimum length of any edge in $\Gamma$ is $1$ (and hence the maximum length of any edge in $\Gamma$ is $h$). Let $\ell = h+1+\epsilon$, for some small constant $\epsilon>0$ such that $\sqrt 2 \cdot \ell$ is not an integer. Tessellate the plane by squares with side length $\ell$; these squares are in the following called \emph{big squares}. Partition each big square into $\lceil \sqrt 2 \cdot \ell\rceil$ rows and $\lceil \sqrt 2 \cdot \ell\rceil$ columns of \emph{small squares}. For each big square $Q$, color the small square at the $i$-th row and $j$-th column of $Q$ with color $c_x$, where $x=(i-1)\cdot \lceil \sqrt 2 \cdot \ell\rceil + j$; then assign to each vertex of $G$ the color of the small square it lies into (vertices on the boundary of several small squares are colored with the color of any of the small squares they belong to). The resulting coloring of $G$ uses $\lceil \sqrt 2 \cdot \ell\rceil^2<(\sqrt 2 \cdot  (h+1+\epsilon)+1)^2$ colors; this is at most $c_2\cdot h^2 + O(h)$, with $c_2=2$. We now prove that the constructed coloring of $G$ is proper; that is, any two vertices $u$ and $v$ of $G$ with the same color are not adjacent. First, if $u$ and $v$ are in the same small square $S$, then their distance is smaller than or equal to the length of the diagonal of $S$. The latter is smaller than $1$, given that $S$ has side length $\frac{\ell}{\lceil \sqrt 2 \cdot \ell\rceil}<\frac{1}{\sqrt 2}$ (the inequality uses the fact that $\sqrt 2 \cdot \ell$ is not an integer); hence, $u$ and $v$ are not adjacent. Second, if $u$ and $v$  are not in the same small square, then they are at a distance greater than or equal to the sum of the side lengths of $(\lceil \sqrt 2 \cdot \ell\rceil-1)$ small squares. This value is $\frac{(\lceil \sqrt 2 \cdot \ell\rceil-1) \cdot\ell}{\lceil \sqrt 2 \cdot \ell\rceil}=\ell -\frac{\ell}{\lceil \sqrt 2 \cdot \ell\rceil}>h+\epsilon$. Since the length of any edge of $G$ in $\Gamma$ is at most $h$, it follows that $u$ and $v$ are not adjacent. 
\end{proof}	

A corollary of Theorem~\ref{th:non-planar} and of the four-color theorem is that planar graphs admit proper straight-line drawings with edge-length ratio in $O(1)$. This is in contrast to our $\Omega(n)$ lower bound for the planar edge-length ratio of $n$-vertex planar graphs shown in Theorem~\ref{th:lower-bound}.


\bibliographystyle{splncs03} 
\bibliography{bibliography}

\end{document}